\documentclass[11pt]{article}
\usepackage{geometry}
\geometry{a4paper,top=1in,bottom=1.5in,left=1in,right=1in}
\usepackage{makeidx}  
\usepackage{booktabs} 
\usepackage{microtype}
\usepackage{color}
\usepackage[algo2e,ruled,vlined,linesnumbered]{algorithm2e}
\usepackage{colortbl}
\usepackage[nolist]{acronym}
\usepackage{tabularx}
\usepackage{graphicx}
\graphicspath{{./graphics/}}
\usepackage{siunitx}
\usepackage{tikz}
\usetikzlibrary{automata}
\usepackage{url}
\usepackage{booktabs}
\usepackage{rotating}
\usepackage[mathcal]{euscript}
\usepackage{times}
\usepackage{titlesec}
\usepackage{authblk}
\usepackage{todonotes}
\usepackage{tcolorbox}
\usepackage{amsfonts}
\usepackage{amssymb,amsmath,amsthm}
\usepackage[primitives]{cryptocode}

\newtheorem{definition}{Definition}

\newtheorem{lemma}[definition]{Lemma}
\newtheorem{corollary}[definition]{Corollary}

\newtheorem{theorem}[definition]{Theorem}

\newcommand{\R}{\mathbb{R}}

\newcommand{\p}{\mathbb{P}}

\SetCommentSty{mycommfont}

\newenvironment{smartcontract}[1][]
  {
   \begin{algorithm2e}[#1]%
	   \DontPrintSemicolon
  }{\end{algorithm2e}}

\newcommand{\ignore}[1]{\relax} 
\usepackage[mathcal]{euscript}
\usepackage{float}
\usepackage{hyperref}
\definecolor{linkgreen}{RGB}{0,130,0}
\hypersetup{
    colorlinks=true,       
    linkcolor=red,          
    citecolor=linkgreen,        
    urlcolor=linkgreen          
}
\title{\textbf{TRIDEnT: Building Decentralized Incentives for Collaborative Security}}

\author[1]{Nikolaos Alexopoulos}
\author[2]{Emmanouil Vasilomanolakis}
\author[3]{Stephane Le Roux}
\author[1]{Steven Rowe}
\author[1]{Max M\"{u}hlh\"{a}user}

\affil[1]{Technische Universit{\"a}t Darmstadt, Germany}
\affil[2]{Aalborg University, Denmark}
\affil[3]{LSV, ENS Paris-Saclay \& CNRS, Université Paris-Saclay, France}
\date{{\tt \small{\{alexopoulos@tk.tu-darmstadt.de, emv@cmi.aau.dk, leroux@lsv.fr, steven.rowe@web.de, max@informatik.tu-darmstadt.de\}}}}

\begin{document}

\maketitle   
\begin{abstract}
	Sophisticated mass attacks, especially when exploiting zero-day vulnerabilities, have the potential to cause
	destructive damage to organizations and critical infrastructure. To timely detect and contain such attacks,
	collaboration among the defenders is critical. 
	By correlating real-time detection information (alerts) from multiple sources (collaborative intrusion detection), defenders can
	detect attacks and take the appropriate defensive measures in time.
	However, although the technical tools to facilitate collaboration exist, real-world
	adoption of such collaborative security mechanisms is still underwhelming. This is largely due to a lack of trust and participation incentives for companies and organizations.
	This paper proposes \textit{TRIDEnT}, a novel collaborative platform that aims to \emph{enable} and \emph{incentivize} parties
	to exchange network alert data, thus increasing their overall detection capabilities. \textit{TRIDEnT} allows parties that may be in a \textit{competitive} relationship, to selectively \emph{advertise},
	\emph{sell} and \emph{acquire} security alerts in the form of (near) real-time peer-to-peer streams.
	To validate the basic principles behind \textit{TRIDEnT}, we present an intuitive game-theoretic model of alert sharing, that is of independent interest, and show that collaboration is bound to take place infinitely often. Furthermore,
	to demonstrate the feasibility of our approach, we instantiate our design in a decentralized manner using Ethereum smart contracts and provide a fully functional prototype.
\end{abstract}
\emph{``No one can build his security upon the nobleness of another person''}\\
\indent -- Willa Cather, Alexander's Bridge
\section{Introduction}
In recent years, cyber-attacks have grown in impact and have affected millions of people and organizations all over the world. Recent examples, like the infamous Wannacry ransomware attack~\cite{ehrenfeld2017wannacry}
and the Mirai \ac{IoT} botnet \cite{antonakakis2017understanding}, incurred losses calculated to amount to hundreds of millions of US Dollars~\cite{miraicost}.
Other attacks were aimed at acquiring restricted data, like the so-called Red October malware which stole vital information from government and research institutions.
The latter operated undetected for five years, according to Kaspersky estimates~\cite{redOctoberAttack}. Similar attacks are provisioned to become more common and disruptive, since attacker incentives grow as more people depend on interconnected devices
(\ac{IoT}) and critical infrastructure components.
Worryingly, all attacks mentioned above were relatively simple, using known ``old'' vulnerabilities (e.g.\ Wannacry used the infamous NSA-related EternalBlue exploit), yet caused
significant disruption. Mass attacks exploiting yet-unknown (zero-day) vulnerabilities can have orders of magnitude larger attack surface and therefore
be destructive.

To timely detect signs of attacks and take appropriate action, firms and organizations use, among other countermeasures, \acp{IDS}.
\acp{IDS} utilize techniques like signature matching (e.g.\ \emph{Snort}~\cite{roesch1999snort} and \textit{Bro} \cite{paxson1999bro}), or sophisticated anomaly-based detection algorithms for the detection of unknown attacks (zero-days) \cite{Chandola:2009:ADS:1541880.1541882}.
However, given the advancing complexity and severity of attacks, isolated \acp{IDS} that only monitor one part of the network are not adequately effective.
Attackers may have objectives that require a series of steps, including information gathering (e.g.\ mass network scans) and vulnerability assessment that may not be treated as a threat when viewed in isolation.
Consequently, these complex, multi-step intrusion attempts (e.g.\ Advanced Persistent Threats) can only be detected by correlating information from different parts of the network~\cite{cuppens2002alert}.
Transfer of information is also of notable benefit in scenarios where adversaries mount mass attacks targeting a large number of defenders by utilizing similar techniques (e.g. the same zero-day vulnerability). Early warning signs (e.g. in the form of a network trace, a malicious URL or IP address)
from defenders that were targeted by a given attack can greatly benefit others against (similar) future
attacks against them. The latter scenario is particularly important, since such mass attacks are widespread in the wild~\cite{allodi2017work}, and as already noted, can have destructive consequences.
Threat intelligence sharing is therefore critical in defending against modern attacks, as it greatly enhances the defensive capabilities of isolated \acp{IDS}.

The topic of collaborative security through security alert sharing and correlation has attracted research interest in itself, with a variety of proposals on
exchange mechanisms~\cite{vasilomanolakis2015survey,zhou2010survey,meng2015collaborative,lincoln2004privacy}, and standardization efforts~\cite{wood2007intrusion,danyliw2016incident}. 
On the one hand, such proposals allow the exchanged data to retain its utility, and on the other hand protect the source of the alerts by filtering out sensitive information (e.g. IP addresses).
However, there is still great reluctance from companies and organizations to share alert data, as
financial incentives for doing so are not clear. This further highlights the central role of (security) economics~\cite{anderson2001economics,anderson2006economics}
in any real-world security problem. Companies are reluctant to engage in sharing activities, due to the fact that
their competitors can take advantage of
their (financial) investments in the detection effort and the ensuing sharing process,
i.e. competitors may \emph{free-ride}\footnote{Actually free-riding is the Nash equilibrium in many security information sharing models~\cite{laube2017strategic}}. Apart from the basic cost of personnel and hardware/software resources dedicated to maintaining the
sharing infrastructure, information leakage when openly publishing (even anonymized) alerts may lead to additional risks. These can come in the form
of law-infringement violations~\cite{tosh2015game} or leaking information to an attacker concerning the defender's location and
defensive capabilities~\cite{shmatikov2007security}.
Finally, distrust among the participants further aggravates privacy concerns, while also limiting
confidence in the received information.
For these reasons, investment in \acp{CIDS} is viewed cautiously by companies and organizations~\cite{ponemon2018study}.

As recounted above, incentive issues have considerably limited the adoption of such collaborative
systems, even though political initiatives (e.g. by the US federal government and the EU -- see e.g.\ \emph{PROTECTIVE}~\cite{PROTECTIVE18})
highly encourage the practice.
Note that the lack of incentives is detrimental in this case because,
contrary to other systems that work and rely on the selfless behavior of community members (e.g.\ the Tor network),
in the case of security alert sharing the main stakeholders are companies that operate with selfish, rather than altruistic motives.
Existing threat intelligence sharing platforms (e.g.\ MISP~\cite{wagner2016misp}, IBM's ``X-Force Exchange''\footnote{https://exchange.xforce.ibmcloud.com/}, or Facebook's ``ThreatExchange''\footnote{https://developers.facebook.com/programs/threatexchange/})
do not consider incentives and trust, and are intended to be operated by central trusted third parties. The latter is an important limitation, since organizations (or even governments) in any kind of a competitive relationship,
will be uneasy providing valuable data to services fully controlled by their competitors.

Motivated by the aforementioned challenges, we proceed to investigate and address the lack of incentives for security information sharing.
In particular, we introduce \textit{TRIDEnT} \textit{(Trustworthy collaboRative Intrusion DETection)}, a novel alert\footnote{When we refer to alerts, we generally include any information (human or machine-produced) that can be used in threat detection or mitigation.} exchange
platform that aims to \emph{enable} and \emph{incentivize} parties to exchange network alert data.
Specifically, \textit{TRIDEnT} introduces an open, carefully designed alert marketplace that offers the required functionality and incentives for entities to share alert data,
while providing the environment for trust relations between them to develop.
To formally investigate the incentives issues of network alert sharing, we provide a novel
game-theoretic model that acts as a validation for the basic principles of our marketplace design.
Finally, we show the feasibility of our proposal by providing a decentralized prototype on top of the Ethereum smart contracts platform.

We believe that this work is a sizeable step forward towards understanding the economic challenges of deploying collaborative security
mechanisms and providing the required infrastructures to enable collaboration among competitors without requiring trust on a single
service provider.
Our contributions can be summarized by the following points:

\smallskip\par\noindent\textbf{The TRIDEnT platform design:}
\emph{TRIDEnT} is a collaborative platform for alert data sharing that facilitates the creation of P2P channels for collaboration among interested parties.
In \textit{TRIDEnT}, alert producers offer their data in the form of live streams, selectively to parties of their choice, thus creating a sharing overlay based on their trust relations.
To attract interested collaborators, they advertise their streams by including information about the data, in the form of tags.
For example, producers may advertise tags
relating to the type of attacks (e.g.\ \ac{DDoS}, malware), the detector (e.g.\ \ac{IDS}, honeypot), the type of network (e.g.\ backbone, corporate),
and so on. 
Interested parties can buy streams with some form of currency\footnote{For the rest of the paper we refer to the exchange medium as the \emph{TRIDEnT token}, however any currency can be used, i.e. a dedicated token is not technically necessary with our existing
incentives analysis. However, we do not rule out that a dedicated token with regulated
supply can be beneficial in other game-theoretic models.}
to pay the producers and start streaming in real time. A rating and trust management system is included to ensure that participants offer good quality
alert data.

\smallskip\par\noindent\textbf{Game-theoretic model of alert exchange:}
To validate the core principles behind \emph{TRIDEnT}, we introduce an intuitive game-theoretic model of network alert sharing,
where two rational selfish defenders try to optimize their payoff against an attacker who performs attacks stochastically. We pinpoint a pattern
of attacks that leads to the exchange of information and we show that this pattern repeats infinitely often, and thus collaboration
is bound to take place. Our model differs from other proposed models for security collaboration (e.g.~\cite{zhu2012guidex} and~\cite{jin2017tradeoff}) in that
it assumes a competitive relation between selfish players that are only interested in sharing information if it will minimize their
expected costs. Thus, security collaboration is not viewed as a matter of good will or legislative pressure, rather as a self-serving
action. Furthermore, our model differs from other strategic information sharing games (e.g. ~\cite{laube2017strategic}), in that the act of sharing information is not considered to
be ``for free'' (as processing and privacy leaks incur costs).

\smallskip\par\noindent\textbf{Instantiation on Ethereum:}
We showcase the feasibility of our system in an open, decentralized setting (anyone can participate), with an implementation on the \emph{Ethereum} platform.
For readers unfamiliar with the basic functionality of Ethereum, a brief
overview is given in Section~\ref{sec:background}.
We describe the design along with the basic components of the prototype 
and show that transaction fees incurred by the system are projected to remain negligible.
Note that the same design can straightforwardly be
adapted to a permissioned setting, e.g.\ by implementing it on \emph{Hyperledger Fabric}~\cite{cachin2016architecture}, in the case of a closed group of known, yet selfish and competitive organizations.

The remainder of this paper is structured as follows. Section~\ref{sec:background} offers some background information, required for the comprehension of the paper. Section~\ref{sec:market} presents the collaborative platform and focuses on its architecture and operations.
Afterwards, in Section~\ref{sec:model} we validate the core ideas of \textit{TRIDEnT} in a game-theoretical model of independent interest. Then, Section \ref{sec:prototype} presents a prototype implementation of \textit{TRIDEnT} and its evaluation. Section~\ref{sec:related} goes over the related work, Section~\ref{sec:limits} discusses limitation of our work, while Section~\ref{sec:concl} concludes the paper.

\section{Background}\label{sec:background}
In this section we provide some basic background about \textit{i)} collaborative security and \textit{ii)} \textit{Ethereum} and smart contracts.

\medskip\par\noindent\textbf{Collaborative Security and \acp{CIDS}}
The term collaborative security is an encapsulation of the idea that \textit{combining knowledge from different sources can be of benefit for the detection and mitigation of attacks}. The reasoning behind this argument is well studied \cite{zhou2010survey}, and can be summarized in the following: detectors, e.g., anomaly detection algorithms, can be improved by enhancing the input data, alert correlation is more effective when the data volume increases, a number of attacks, e.g., malware spreading, can be contained even before seen locally when they are anticipated as a result of collaboration.
For instance, B\"ock et al.~\cite{leon2018next} recently showed that correlation of data from different sensors is necessary to enumerate advanced P2P botnets.
\acfp{CIDS} further formulate the aforesaid idea by implementing it in the form of a system \cite{vasilomanolakis2015survey}. Besides academic work, e.g., \cite{vasilomanolakis2015skipmon,fung2011dirichlet}, there have been various attempts to realize \acp{CIDS} \cite{ullrich2000dshield,PROTECTIVE18}. Nevertheless, the majority of the proposed systems are either theoretical or assume that participants are trusted and are willing to collaborate. 

\smallskip\par\noindent\textbf{Ethereum and smart contracts}
\emph{Ethereum}~\cite{wood2014ethereum} is a decentralized platform (distributed ledger) for Turing-complete applications, facilitated by smart contracts,
with a native cryptocurrency called \emph{ether}. The state of the platform is collectively maintained by its users via a consensus mechanism, thus guaranteeing its
correct operation under reasonable security assumptions (honest majority of computing power and relatively good network connectivity). Thus, the platform is as if it were operated by a trusted party with public internal state. Smart contracts can be written in high level languages (e.g., \emph{Solidity}),
compiled and executed by the \emph{\ac{EVM}} which can be thought of as a large decentralized computer,
spread across the world. A smart contract can be deployed by issuing a ``deploy'' transaction
to the network, and functions of the smart contract can be called by referencing the address where the contract is stored
in the ledger. Transactions are statements issued by a party, signed by the party's secret key $s_k$, and sent to the
network. They can either invoke functions in a contract, transfer ether between accounts, or deploy new contracts.
Because data stored in the ledger is replicated among all participants, and code affecting this data has to
be executed by all participants, each transaction incurs a \emph{gas} cost to the issuing party, depending
on the ``work'' the participants have to invest in processing it (e.g., the computational complexity of the operations).
Each operation in the \ac{EVM}'s instruction set is associated with a specific gas cost, and the issuing party has to include
a corresponding offer of ether per gas unit consumed, in order to incentivize the network to run the computations
requested and update the global state.

\section{The TRIDEnT sharing platform}\label{sec:market}
We begin the presentation of our platform by familiarizing the reader
with its basic components and their interaction, along with a
high-level overview of their functionality. We then proceed to describe
in more detail the more interesting parts and design choices. An avid reader
can infer all details about our platform from the smart contract pseudocode of our
implementation (see Section~\ref{sec:prototype}).
\begin{figure}
	\begin{center}
		\includegraphics[scale=0.7]{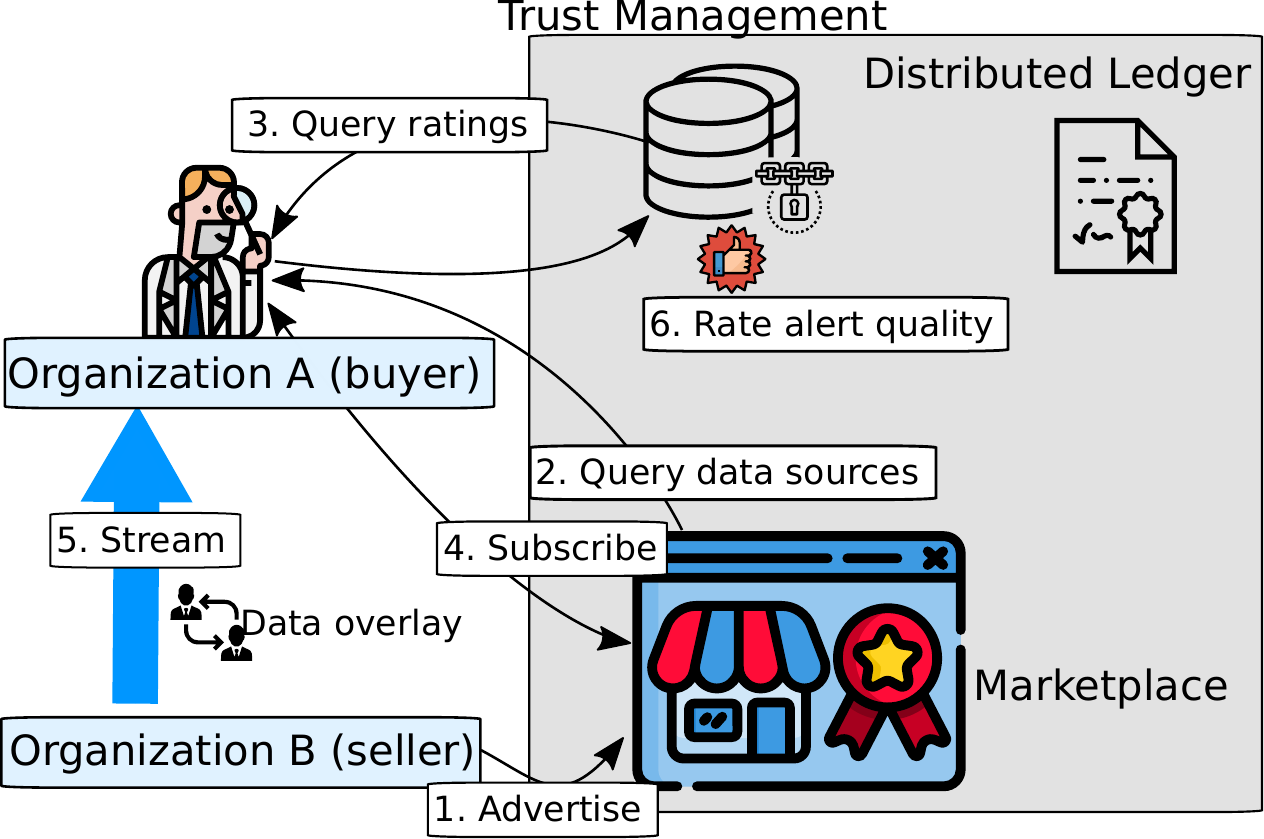}
		\caption{\textit{TRIDEnT} workflow.}
		\label{fig:arch_pic}
	\end{center}
\end{figure}
\subsection{System architecture and overview}
\textit{TRIDEnT} can be conceptually dissected into the following layers:
\smallskip
\par\noindent\textbf{Distributed Ledger Layer: }
The system is built on top of a distributed ledger (DL), e.g. the Ethereum platform~\cite{wood2014ethereum}, which offers strong consistency
and availability guarantees and allows participants to execute arbitrary (Turing-complete) operations
on its state. This is the base layer of the framework and lays the trust foundations for the following
layers, emulating a trusted third party. The benefits of using a DL as a sharing
infrastructure have been documented recently~\cite{alexopoulos2017towards,webstersharing}.\smallskip
\par\noindent\textbf{Trust Management Layer: }
As participants can behave maliciously at times and towards certain parties,
a trust management mechanism (i.e. generalized reputation system) is necessary to support good behavior. Ratings
from buyers follow transactions (stream establishment),
and are stored on the DL. At each time, a peer is able to
calculate a local trust score for each other peer, via a trust calculation algorithm of
her choice. In our design (see Section~\ref{sec:trust}) we use an adaptation of the
Bayesian mechanism of~\cite{ries2007certain}.
\smallskip
\par\noindent\textbf{Marketplace Layer: }
The system provides economic incentives for honest behavior by utilizing
a token-based economic mechanism. Tokens can be thought of as a currency
special to the system, used to buy streams from other participants. 
In the case where a native cryptocurrency is offered by an underlying
distributed ledger, then this can be used as a token.
A game-theoretic justification for the alert marketplace can be found in Section~\ref{sec:model}.
\smallskip
\par\noindent\textbf{Data Overlay Layer: }
After establishing a connection, a seller and a buyer open a channel where
alert data and tokens are exchanged. Multiple such channels create sharing overlays.
We employ off-chain transactions (see the Raiden network\footnote{https://raiden.network/}) to enable seamless and
near-instantaneous token transfers.
\medskip\\
An intuitive description of the basic functionalities of the platform,
along with the connections among the layers can be seen in Figure ~\ref{fig:arch_pic}.
A simplified example operation scenario follows.
Organization B, acting as an alert \emph{seller} advertises the information he can offer
(e.g. alerts from his industrial IDS) on the marketplace.
Organization A, acting as an alert \emph{buyer}
(i.e. party interested in acquiring alert data) queries the marketplace where
advertisements of alert streams are posted.
The buyer also computes trust
scores for each seller by consulting previous ratings, and decides whether she wants to make a subscription offer matching a stream's
requested price. Assuming the buyer makes an offer for a stream, and the corresponding seller decides to accept it (the seller can also
consult the buyer's trust score as risk in these transactions is bidirectional), a data stream and a payment channel are established.
The buyer may now rate the stream (and indirectly the seller) based on its perceived quality and whether or not it matches the
advertisement. The rating is stored on the ledger. Details about the platform's operation follow.

\subsection{Marketplace functionality}
The core functionality of the platform is summarized by the marketplace function definitions of Table~\ref{tab:functions}, which are
to be instantiated by a smart contract.
Our actual instantiation of these functions is reserved for Section~\ref{sec:prototype}.
Users register on the platform by providing their public key and ``burning'' a reasonable amount of \textit{ether} to create an initial trust value ({\tt{register}}). Then they can advertise streams
({\tt{advertise}}), make offers ({\tt{mkOffer}}) for existing advertisements, or accept offers ({\tt{accOffer}}) for their own advertisements.
In particular, when making an offer, a security deposit has to be provided in order to incentivize the act of providing ratings (see below).
The marketplace also offers functionality for deleting offers ({\tt{delOffer}}) and
unsubscribing from streams ({\tt{unsubscribe}}), while enforcing the necessary constraints. Finally, ratings can be provided via
calling the {\tt{rate}} function. A deposit is used to incentivize buyers to provide ratings for streams.
Note that our design is generic and can in theory provide incentives and trust for any kind of information sharing platform.
However, security information sharing is particularly challenging, as it will become obvious in the following paragraphs,
and we therefore focus on it.
In the remainder of this section we present some notable characteristics of \textit{TRIDEnT}'s operation.
\begin{table}[h]
{\footnotesize
\centering
\begin{tabular}{lll}\toprule
\textbf{Function} & \textbf{Description} & \textbf{Constraints} \\ \midrule
\tt{register} &  used for initial registration & \\
&  burns ether & \\ \midrule
\tt{advertise} &  create new advert. with & \\
 & chosen tags & \\ \midrule
\tt{rmAdvert} &  remove given advert.  & \\
 & + related offers and subscr. & \\ \midrule
 \tt{mkOffer} &  create offer for given advert. &  {\footnotesize{deposit has to be provided}}\\ \midrule
 \tt{delOffer} &  delete given offer &  {\footnotesize{caller has to be the advert.}}\\
 & & {\footnotesize{publisher or offer creator}} \\\midrule
 \tt{accOffer} &  delete offer and create subscr. &  {\footnotesize{caller has to be the advert.}}\\
 & & {\footnotesize{publisher}} \\ \midrule
 \tt{unsubscribe} &  delete subscription & {\footnotesize{caller has to be the advert.}}\\
 & & {\footnotesize{publisher or the subscriber}} \\ \midrule
 \tt{rate} & add rating & {\footnotesize{only one rating/subscr.}}\\
 &  & {\footnotesize{caller has to be subscriber}} \\
 &  & {\footnotesize{timer must not be expired}} \\ \midrule
\end{tabular}
\caption{Marketplace function definitions with constraints.}
\label{tab:functions}
}
\end{table}
%
\smallskip\par\noindent\textbf{Alert advertisement format: }
An alert advertisement includes three important characteristics of the stream, namely the expected mean
throughput of the stream (e.g. alerts per hour), the price of streaming (e.g. tokens per alert batch), and a
list of strings (\textit{tags}) offering information on the type and origins of the alerts in the stream. A simple example is shown in Table~\ref{tab:advert}.
Note that the formats presented in this section serve as simple abstract templates to showcase the functionality of \emph{TRIDEnT}, meaning that
they can be readily enhanced as required.

With regard to the tags, we have classified this information into three categories: (i) type of the detector (e.g. honeypot, \ac{IDS}, firewall, etc.), (ii) type
of the network (e.g., university network, home network, backbone), and (iii) type of observed attacks (e.g., \ac{DDoS} attacks, malware, botnet,
APTs, port scans, etc.). Here we note that the third type of information is valid only for streams that provide historical data
(not live data). Whereas in first inspection, it may seem useless to enable exchange of stale alerts in our system, this is
often useful. For example, companies specializing in developing solutions for detecting and anticipating specific kinds of attacks would be very interested in alerts that have been labeled accordingly.
\smallskip\par\noindent\textbf{Advertising and subscribing to streams. }
Participants of the CIDS can subscribe to alert streams of other parties and also advertise their own streams on the ledger.
Participants are typically \acp{IDS} or honeypots but also other \acp{CIDS} or humans, e.g., security experts.
For the CIDS as a whole, it is desirable that parties who consume alert data from the other parties, also provide their own data.
An advertisement contains the id of the publisher that can be used to calculate her trust value,
and tags that describe which kind of alerts the subscriber can expect, as explained above.
This allows the node to browse all available alert streams.
{\small
\begin{table}[h]
\centering
\begin{tabular}{l}
\begin{tcolorbox}
{\tt Publisher:} GoodIDS\\
{\tt Expected throughput:} 10/hour\\
{\tt Price:} 1\$/alert\\
{\tt Detector type:} IDS\\
{\tt Type of network:} industrial\\
{\tt Type of attacks:} DDoS, botnet\\
\end{tcolorbox}
\end{tabular}
\caption{Example advertisement.}
\label{tab:advert}
\end{table}}
%
%
\smallskip\par\noindent\textbf{Stream establishment and operation.}
Figure~\ref{pr:stream} presents the stream establishment protocol of \emph{TRIDEnT}, involving a Buyer, a Seller and the
system's smart contract, which facilitates operations. First, involved parties create and register a (public) key
to be used for authentication and encryption. In practice, an implementation of OpenPGP (e.g. GnuPG) could be used for all cryptographic operations
in this protocol, potentially with different keys for authentication and encryption, as per standard practice.
Upon deciding to make an offer for a stream corresponding to an advertisement $S_i$, the Buyer calls the \texttt{mkOffer} function of the smart contract,
thus transmitting the offer to the DL.
Upon deciding to accept the offer, the seller publishes on the ledger the signed and encrypted address of the socket (IP address, port) where the buyer can connect to
start streaming. The address is encrypted with the buyer's public key (initialized upon registration) and ensuing connections to the socket (assuming initial authentication
is successful) are encrypted and authenticated. Execution paths concerning decisions not to accept the offer are pruned from Figure~\ref{pr:stream} for brevity.
\begin{figure}
		\begin{center}
	\fbox{
		\procedure{Stream establishment protocol ($\mathcal{SE}$)}{
			\textbf{Buyer} \< \hspace{0.3cm}\textbf{Contract} \< \hspace{1.7cm} \textbf{Seller} \pclb
		\pcintertext[dotted]{initialization}
		\< \< (sk_S, pk_S) \gets Gen(\secparam) \\
		\< \< \sendmessageleftx[2cm]{0}{\text{\texttt{register}}(pk_S)}  \\
		(sk_B, pk_B) \gets Gen(\secparam) \< \< \< \< \\
		\sendmessageright*[2cm]{\text{\texttt{register}}(pk_B)} \< \< \pclb
		\pcintertext[dotted]{exchange}
		\sendmessageright*[2cm]{off=\text{\texttt{mkOffer}}(S_i)} \< \\
		\< \< m \gets \sig_{sk_S}(\text{ip:port})\\
		\< \< c \gets \enc_{pk_S}(m)\\
		\< \< \sendmessageleftx[2cm]{0}{\text{\texttt{accOffer}}(off, c)}  \\
		\sendmessageright*[2cm]{\text{read}(c)} \< \\
		\text{ip:port} \gets \dec(c) \< \< \\
		\>  \sendmessagerightx[4cm]{4}{connect(\text{ip:port}) } \> \\
		\> \sendmessageleftx[4cm]{4}{nonce} \> \\
		ss \gets \sig_{sk_B}(nonce) \< \< \\
		\>  \sendmessagerightx[4cm]{4}{ss} \> \pclb
		\pcintertext[dotted]{success} \\
		\< \< \verify(ss, nonce, pk_B) \stackrel{?}{=} 1 \\
		\> \sendmessageleftx[4cm]{4}{\text{accept and start streaming}} \> \pclb
		\pcintertext[dotted]{failure}
		\< \< \verify(ss, nonce, pk_B) \stackrel{?}{=} 0 \\
		\> \sendmessageleftx[4cm]{8}{\text{reject and drop}} \>
	}
	}
	\caption{Stream channel establishment via a smart contract. Notation follows standard practice, with $Gen$ the key generation
	operation (with security parameter \secparam), $\sig$ the digital signature operation, $\enc$ and $\dec$ the operations of encryption and decryption, and $\verify$
	signature verification. Typewriter font is used for calls to the marketplace functions.}
	\label{pr:stream}
\end{center}
	\end{figure}

\emph{TRIDEnT}, as a basic infrastructure, can be used to share any kind of
security-related information (malware signatures, software vulnerabilities, etc.)
However, in this paper we focus on network alerts, generated by \acp{IDS} such as Snort or anomaly detection systems,
given in a suitable format for exchange that can be readily processed.
We propose the state-of-the-art STIX format~\cite{stixformat}, although other formats could also be used. For a comparison of alert exchange formats see~\cite{menges2018comp}.
A general template is given in Table~\ref{tab:alert}.
A specific STIX alert in JSON format signaling a malicious URL follows in Table~\ref{tab:json}.
We refer the reader to the documentation of STIX\footnote{https://stixproject.github.io/about/} for a more detailed description.
\emph{TRIDEnT} is agnostic to the specific format used.
{\small
\begin{table}[h]
\centering
\begin{tabular}{l}
\begin{tcolorbox}
{\tt Time:} creation/sending time\\
{\tt Source:} origin of attack\\
{\tt Target:} target of attack\\
{\tt Classification:} name and/or CVE\\
{\tt Assessment:} e.g. severity, potential impact, etc.\\
\end{tcolorbox}
\end{tabular}
\caption{Generic alert format.}
\label{tab:alert}
\end{table}}
{\small{
\begin{table}[h]
\centering
\begin{tabular}{l}
\begin{tcolorbox}
\{\\
      ``type'': ``indicator'',\\
      ``id'': ``indicator--9299f726-ce06-492e-8472-2b52ccb53191'',\\
      ``created\_by\_ref'': ``identity--39012926-a052-44c4-ae48-\ldots'',\\
      ``created'': ``2017-02-27T13:57:10.515Z'',\\
      ``modified'': ``2017-02-27T13:57:10.515Z'',\\
      ``name'': ``Malicious URL'',\\
      ``description'': ``This URL is potentially associated with malicious
      activity and is listed on several blacklist sites.'',\\
      ``pattern'': ``[url:value = 'http://paypa1.banking.com']'',\\
      ``valid\_from'': ``2015-06-29T09:10:15.915Z'',\\
      ``labels'': [``malicious-activity'']\\\}
\end{tcolorbox}
\end{tabular}
\caption{Example STIX alert (JSON format). More examples can be found in the docs\protect\footnotemark.}
\label{tab:json}
\end{table}}}
\footnotetext{https://oasis-open.github.io/cti-documentation/stix/examples}

In order to support real-time micro-payments, we employ off-chain payment channels, as provided by the Raiden Network.
Specifically we utilize the $\mu$-Raiden variant, which supports only one receiver but is simpler and cheaper in terms of transaction costs.
A uni-directional payment channel is created upon successful establishment of a stream between the buyer and the seller. Subsequent payments
take place via the channel, thus allowing per-alert payments with practically zero transaction fees.
\subsection{Building trust}\label{sec:trust}
As recounted earlier, trust among the participants of collaborative security mechanisms is of paramount importance.
The decentralized smart-contract-based design of \emph{TRIDEnT} does not require a trusted party to act as the operator
of the sharing service, thus avoiding the need for the participants' reliance on a common provider.
However, some parties might try to profit from the streams of the other parties while themselves providing fake or bad quality data.
There is also a risk of malicious participants that may try to destroy the \ac{CIDS}, e.g., by providing wrong alert data or by bad-mouthing
honest parties.
Therefore, some form of trust management, such as a reputation system, is required in order to
protect participants from malicious insiders that disseminate low quality, malformed,
or misleading alerts. \textit{TRIDEnT} comes with a built-in trust bootstrapping and rating system, thus
providing the foundations for trust assessment.
\smallskip\par\noindent\textbf{Trust bootstrapping: }Since new parties may enter the system at any time (\emph{TRIDEnT} is
meant to be an open system), a way to bootstrap their standing in the community is needed, even if they do not have real-world social
ties to other participants. At the same time, this bootstrapping mechanism should mitigate sybil attacks.
In \emph{TRIDEnT}, trust bootstrapping is achieved via Proof-of-burn.
To register in the system, a participant ``burns'' an amount of cryptocurrency in order to prove her commitment to the community.
Cryptocurrency ``burning'' refers to the act of rendering an amount of currency provably un-spendable, equivalent to actual burning of currency notes.
The amount required to achieve a baseline trust value and start interacting in the \emph{TRIDEnT} community should remain low in order not to be considered an
entry barrier for new participants. On the other hand, the amount required to achieve considerably higher initial trust should increase quickly, as to
make it difficult for malicious participants to acquire high initial trust and harm the system. The exact baseline amount is a deployment decision,
not addressed in this paper.
\smallskip\par\noindent\textbf{Rating: }
Transaction ratings are a widely used and empirically effective tool for building trust in marketplace environments. Rating in \emph{TRIDEnT} is intertwined
with the other marketplace operations. After a stream is established, the buyer can rate the seller by
calling the \texttt{rate} function of the smart contract. This rating is thus part of the global state of the underlying \ac{DL} and therefore
visible to all participants. After the stream is closed, the buyer can finalize her rating by calling the \texttt{rate} function once again. This
is to disallow sellers to provide good quality data up until they are rated positively, and then decrease quality. The perceived quality of a
stream depends on whether data provided are judged to be real, useful, and corresponding to its advertisement.

A major problem faced by rating systems is user unwillingness/laziness to provide ratings, especially in the case where it involves extra
effort or possible additional transaction fees. In \emph{TRIDEnT}, users are monetarily incentivized to provide ratings by mechanisms built in the
system. Specifically, upon making an offer, the buyer deposits a small but not negligible fixed fee to the marketplace smart contract.
This fee is returned to the buyer if the offer is rejected or, in two rounds, upon submitting a rating. It is easy to see that if the
fixed fee is considerably larger than the transaction fee required to call the \texttt{rate} function, buyers are incentivized to
follow up any purchase with a rating.

The mechanism described above and implemented in our system, offers clear incentives for buyers to submit ratings. However, incentives for providing \emph{correct}
ratings may also be needed. Although these fall outside of the scope of this paper, \emph{TRIDEnT} can straightforwardly support algorithms
proposed in literature to overcome this problem, e.g. the elegant idea proposed by Jurca and
Faltings~\cite{jurca2003incentive}. The idea is to establish a side-payment channel
organized by a set of broker agents that buy and sell ratings.
They reward (pay) an agent who submitted a rating if the following rating
(coming from another agent) agrees with hers. Under well-defined
assumptions (the next experience is more likely to be the same as the previous: $Pr[C^i_{t+1}|C^j_t]>0.5$ for $i=j$),
truthful reporting is a Nash equilibrium. In \emph{TRIDEnT}, no broker agents are required;
the mechanism can be added to the existing protocol and executed by smart contracts. In this sense, \emph{TRIDEnT} is an extensible platform.
\smallskip\par\noindent\textbf{Local trust computation: }The local trust computation algorithm employed by each party to decide whether or not to engage
in exchange activities with another party is subject to choice, and technically not part of the core \emph{TRIDEnT} infrastructure. Generally, trust decisions in this setting are more complex
than just computing a reputation score and require human engagement. However, we still consider trust scores as valuable decision-supporting tools, and in our
instantiation we use an adapted version of \emph{CertainTrust (CT)}~\cite{ries2007certain} to calculate a trust score for each participant.
In contrast to simplistic ``average rating'' trust representations, CT, since it is based on Bayesian statistics principles,
can model the perceived accuracy of the derived trust value, as well as incorporate prior information (corresponding to Bayesian priors).
More details about our construction follow.

\par{(Bayesian evidence-based trust assessment:)}
In the Bayesian representation of \emph{CertainTrust / CertainLogic}~\cite{ries2007certain,ries2011certainlogic,hauke2015statistics},
\emph{trust} is calculated as the expectation/probability:
\begin{equation}\label{eq:trust}
	E = c_e \cdot t + (1-c_e) \cdot f
\end{equation}
where $t$ is the point estimate of the (probability of success) parameter of a binomial distribution, $f$ is the a priori expectation
of the same parameter, and $c_e$ is a factor that ``fades'' the effect of the prior value, as more pieces of
evidence are taken into account, in accordance with Bayesian statistical inference. Evidence in this model
are either positive or negative experiences (binary ratings). The number of positive and negative experiences is denoted by $r$ and
$s$ respectively, and $n=r+s$. The mapping between the evidence space ($r,s$) and the Bayesian probability space is given by:
\begin{equation}
	t = \left\{
		\begin{array}{lll}
			\frac{r}{r+s} & \mbox{if } r+s>0 \\
			0 & \mbox{else}
		\end{array}
	\right.
\end{equation}
\begin{equation}\label{eq:c}
	c_e = \left\{
		\begin{array}{lll}
			0 & \mbox{if } n=0 \\
			\frac{N \cdot n}{2\cdot w \cdot (N-n)+ N \cdot n} & \mbox{if } 0<n<N \\
			1 & \mbox{if } n \geq N
		\end{array}
	\right.
\end{equation}
where $w$ is a normalizing value we set to $1$, and $N$ is a user-set threshold number of evidence needed to achieve
a desired level of significance of the estimate (in the sense of a confidence interval -- see Appendix~\ref{app:setN}
for details on how to compute a mathematically sound N). It is easy to see from Eq.~(\ref{eq:c}) that the value of $c_e$
increases with increasing number of evidence $n$, as intended.

\par{(A-priori trust through proof-of-burn:)}
We consider the PoB-derived baseline trust as the Bayesian \emph{a priori}
expectation $f$ regarding the trustworthiness of an entity.
\begin{figure}
\centering
\includegraphics[scale=0.5]{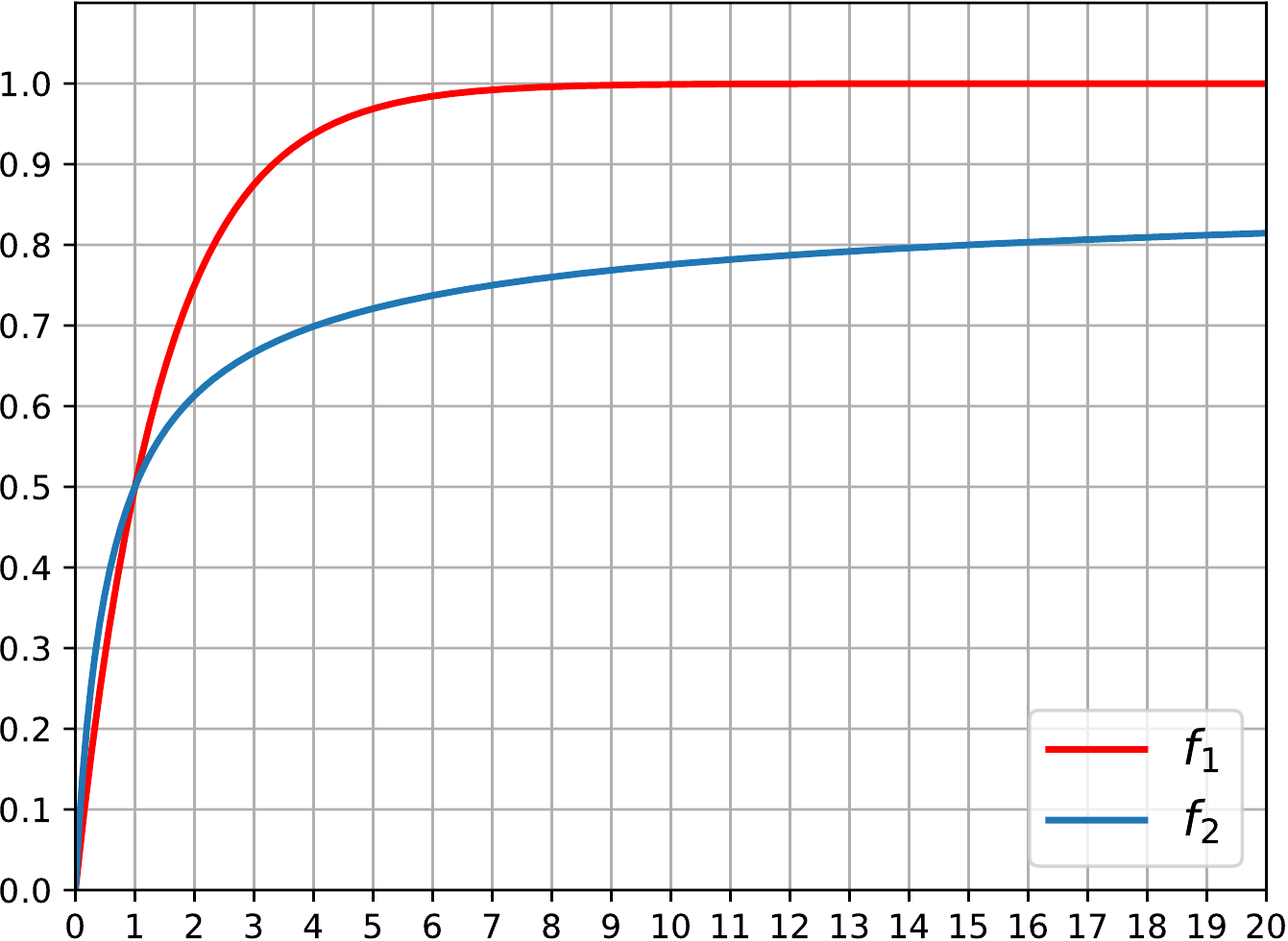}
\caption{Comparison of proof-of-burn candidate functions.}
\label{fig:pob}
\end{figure}
The function $f^*: r \in \mathbb{R}^+ \rightarrow f \in [0,1]$, relating the amount of cryptocurrency units burned, and the resulting baseline trust,
where $r=\frac{x}{c}$ the fraction of the amount of cryptocurrency burned and $c$ the baseline value for achieving initial trust of $0.5$, should
satisfy the following properties:
\begin{equation}
	f^*(0)=0,\ f^*(1)=\frac{1}{2},\ \lim_{r\to\infty} f^*(r) = 1,\ f^{*'}(r)>0,\ f^{*''}(r)<0
	\label{eq:props}
\end{equation}
\ignore{
In~\cite{zindros2014_pseudonymous}, a function $f_1=1-(\frac{1}{2})^r$ is proposed. It is easy to show that the function satisfies the properties above.
However, we observed that even though $f_1''(r)<0$ the function grows too quickly for our use-case. Taking an example, for the default value $c$,
$r=1$, and $f_1(1)=\frac{1}{2}$. The amount $c$ should be set at a reasonable price to allow all willing participants to bootstrap their trust score
in the network. However, assume a party invest four time the default amount, then her a priori trust would be $f_1(4)=\frac{15}{16}$, which is
very close to $1$. Therefore, we need a different function $f_2$ that has the special characteristic that it grows fast until $f_2=\frac{1}{2}$,
and then much slower, making it very difficult to purchase very high bootstrap values.
}
After considering different functions, we arrive at the definition of $f^* \equiv f_1$, with:
\begin{equation}
	f_1(r)=1-\frac{1}{1+log_2(r+1)}
\end{equation}
The definition of $f_1$ satisfies the properties of Eq.~(\ref{eq:props}), and additionally, its growth rate is suitable to our needs. In Figure~\ref{fig:pob},
both $f_1$ and $f_2 =1-(\frac{1}{2})^r$ (proposed in~\cite{zindros2014_pseudonymous}), are depicted. We see that $f_1$ rises faster up to the value of $0.5$, and then its growth rate drops significantly, especially
with respect to $f_2$. By adopting $f_1$, users will be able to acquire a baseline trust value of $0.5$ by spending a reasonable fee, but to acquire
significantly larger initial trust, they will have to pay exponentially more. Even by paying 20 times the default fee, their initial trust
would not be much higher than $0.8$.
\subsection{Privacy discussion}
\textit{TRIDEnT}, by design, offers increased privacy assurances for participants by allowing the formation of P2P
overlays based on trust, rather than indiscreetly publishing alert data.
Alerts are assumed to be sanitized according to best practices (e.g.,\cite{lincoln2004privacy}), limiting the
exposure of private information to collaborators, and the cost of sharing privacy-sensitive information is taken
into account in the game-theoretic analysis of the following section.
Furthermore, alert sanitization could be executed depending on the
trust relation between the collaborators, allowing for less sanitized -- more valuable alerts to flow, as trust
between collaborators grows.
Techniques offering anonymous payments and privacy-preserving reputation (e.g.,~\cite{schaub2016trustless})
could prove beneficial to our design, but are out of our current scope of research. Overall, our system offers
a solid foundation for data exchange, able to support existing and upcoming privacy-enhancing techniques based
on alert sanitization, bloom filters etc., which are not the focus of this paper.
\subsection{Attacks and defenses}\label{sec:attacks}
Apart from the usual attacks and corresponding defenses against trust\,/\,reputation
systems in the marketplace context (see e.g.~\cite{hoffman2009survey}), there are some
specific notable issues that \textit{TRIDEnT} is faced with, that may not be obvious on
first sight.
\smallskip\par\noindent\textbf{Target bad-mouthing attack: }In this scenario, the
adversary attempts to artificially lower the reputation of the target party, so that he
can discredit her reporting competence before attacking her. This way, the adversary
would be able to lower the chance of the attack evidence being spread among the defenders.
The most common way of performing such an attack is creating multiple identities and
giving a lot of negative ratings. Since joining the system comes at a non-negligible cost
and submitting a rating requires starting a stream, which incurs respectable costs,
the attack is expensive for an adversary to perform, and not scalable. Remember, that the
principal aim of \emph{TRIDEnT} is to mitigate mass attacks, not targeted ones.
\smallskip\par\noindent\textbf{Stream reselling: }In this attack, an adversary resells
alerts he has bought from sellers, to other parties. Apart from this being a problem for
the incentives mechanism (the reseller can practically free-ride), this attack can also
compromise the security of the system, as alerts may contain sensitive information and
are intended for use by the designated recipient. Enforcing watermarking on security
alerts seems almost impossible, as the distinctive information would be easy to locate
and remove. A solution with the use of fake alerts as watermarking would also reduce the
value of the alerts as a whole, which goes against the goals of the system. Therefore,
defense against reselling attacks relies on the same mechanism as defense against
low-quality alerts, i.e. the trust management system already described in the previous sections.
Hence, sellers also face risk and should assess their trust in their buyers. That being
said, studying the reselling attack in isolation is a challenging open problem for the
community. For example, mechanisms that would allow honest buyers to discern resellers
could be valuable for the system. One could envision a solution where honest participants
continuously and randomly check a subset of a node's incoming and outgoing streams to
detect resells. This check could be performed in a privacy-preserving manner via
secure multiparty computation. These solutions could of course find applications in
other data sharing applications.

Although a very challenging attack to fully rule out,
reselling also comes with some boundaries that practically reduce its
effect. Most notably, a seller offering a huge amount of streams (due to reselling), would
be very suspicious, and thus would not be preferred by buyers.

\smallskip\par\noindent
After presenting the basic functionality of \textit{TRIDEnT}, we proceed to validate the basic principles behind it with a formal probabilistic model.

\section{A probabilistic and game-theoretic approach}\label{sec:model}
This section presents a simplified mathematical model of our ecosystem. The purpose is not to describe the ecosystem in a comprehensive manner; rather we present a model where several relevant phenomena still occur, including regular exchange of information. 

Our model has five abstract parameters: low and high probabilities to be attacked, cost of sharing information, cost of preparing to defend, and cost of being attacked unprepared. Each of these abstract parameters corresponds to several real-world parameters. For example, the cost of sharing information is the aggregation of writing a report that is meant to be shared, accessing the sharing platform, taking the risk of information leakage, etc.  Further details regarding these parameters can be found in Section~\ref{sect:model-discussion}.

Our model is thus easy to grasp and mathematically tractable, while representing our design and the real world rather faithfully (although not completely). Slight modifications could make our model more faithful but also more complex without necessarily providing new insights about the relevant phenomena. In this context, our current model offers a fair evaluation and justification for our design.

Section~\ref{sect:game-model} presents our model; Section~\ref{sect:game-result} presents our result stating that exchange of information occurs regularly; and Section~\ref{sect:model-discussion} discusses how realistic our model is.

\subsection{The model}\label{sect:game-model}
\par\noindent{\textbf{Overview.}}
Let us define a game between two players (defenders) and an attacker. The game consists of infinitely many rounds. 
At the end of each round each player may or may not be attacked. A player can only observe whether herself is attacked, but at the beginning of each round (beside the first round) she can pay the other player and learn whether the other player was attacked during the previous round. At the beginning of the first round, the players set their respective selling prices for disclosing an attack. In the middle of each round, each player may or may not prepare to defend against an attack.
\par\noindent{\textbf{The attacks.}} In our model, if someone (resp. no one) was attacked during the previous round, each player is attacked independently with probability $q$ (resp. $p < q$). By convention, at the first round the probability $q$ is used instead of $p$. 
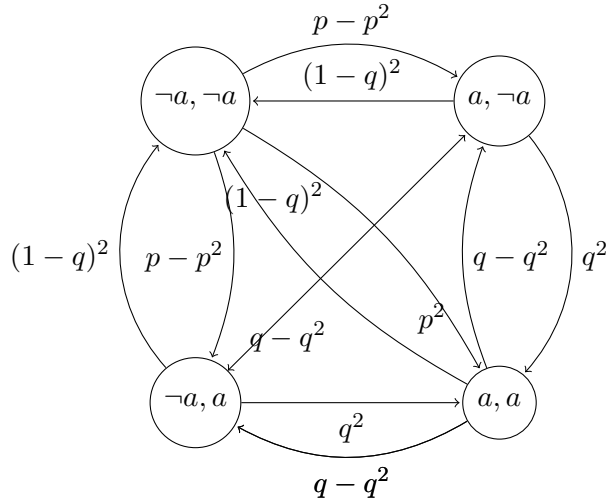
\begin{figure}[h]
	\centering
	\begin{tikzpicture}[shorten >=1pt,node distance=4cm, auto]
	\node[state] (nn) {$\neg a,\neg a$};
	\node[state] (an) [right of = nn] {$a,\neg a$};
	\node[state] (na) [below of = nn]{$\neg a, a$};
	\node[state] (aa) [below of = an]{$a, a$};
	
	\path[->] (nn) edge [bend left] node [above] {$p-p^2$} (an)
	(an) edge node [above] {$(1-q)^2$} (nn)
	(an) edge [bend left = 40] node [right] {$q^2$} (aa)
	(aa) edge  [bend left = 20] node [right] {$q-q^2$} (an)
	(nn) edge [bend left = 20] node [left] {$p-p^2$} (na)
	(na) edge  [bend left = 40] node [left] {$(1-q)^2$} (nn)
	(na) edge node [below] {$q^2$} (aa)
	(aa) edge  [bend left] node [below] {$q-q^2$} (na) 
	(nn) edge [bend left = 15] node [near end, below] {$p^2$} (aa)
	(aa) edge [bend left = 15] node [near end, above] {$(1-q)^2$} (nn)
	(aa) edge  [bend left] node [below] {$q-q^2$} (na);
	\draw [<->]	(na) edge  node [near start, below] {$q-q^2$} (an);
	\end{tikzpicture}
	\caption{Markov chain example depicting attacker behavior (self loops not displayed).}
	\label{fig:aMC1}
	%
	\vspace{-0.35cm}
\end{figure}
The attacker's behavior is represented by the Markov chain in Figure~\ref{fig:aMC1}, where, e.g., the pair $(a,\neg a)$ means that player $0$ is attacked and player $1$ is not attacked. To make sure that the starting probability is $q$, a dummy starting state at time $0$ can be thought as being anything but $(\neg a, \neg a)$. 

\smallskip\par\noindent{\textbf{Defense.}}
Whereas attacks happen at the end of each round, decisions to prepare to defend (denoted $d$), or not to prepare to defend (denoted $\neg d$), are made just before, in the middle of each round. If a player does not defend and the attacker does not attack her, the cost for the player is zero. If the player does not defend and the attacker performs an attack, the cost is $\alpha > 0$. If the player defends, the cost is $\delta > 0$ regardless of the attacker's action.

\smallskip\par\noindent{\textbf{Trading information.}}
Whereas attacks happen at the end of each round, and the decision to defend is made in the middle of each round, trade takes place at the beginning of each round. At the first round, each player $i \in \{0,1\}$ sets a selling price $s^i$ once and for all. At the beginning of every other round, each player may or may not buy, denoted $b$ and $\neg b$, respectively, the piece of information whether the other player was attacked during the previous round. If player $i$ buys and player $1-i$ was attacked, player $i$ pays $s^{1-i}$ to player $1-i$; if player $i$ buys and player $1-i$ was not attacked, player $i$ pays nothing. Moreover, disclosing an attack one  has undergone to the other player costs $s$ to oneself.

\smallskip\par\noindent{\textbf{Playing formally.}}
Let $A := \{a,\neg a\}$, let $B := \{b,\neg b\}$, let $D := \{d,\neg d\}$. A play is a sequence $\sigma d_1 A_1 \cdot b_2 d_2 A_2 \cdot b_3 d_3 A_3 \dots \in (\R^2 D^2 A^2) \cdot (B^2 D^2 A^2)^\mathbb{N}$ where $\sigma = (s^0,s^1)\in \mathbb{R}^2$, $b_n = (b_n^0,b_n^1) \in B^2$, $d_n = (d_n^0,d_n^1) \in D^2$, and $A_n = (A_n^0,A_n^1) \in A^2$ \footnote{The single dot between rounds is just here for readability.}. The $A_n^i$ are capitalized as a reminder that the attacks are random variables, as opposed to the other variables relating to decisions of the players.

For example, the finite prefix $(3,2)(\neg d, d)(a,a) \cdot (b,\neg b)(d,d)(a,\neg a)$ corresponds to two rounds. For the first round, player $0$ and $1$ set selling prices $3$ and $2$, respectively; then only player $1$ defends, but both players are attacked. For the second round, only player $0$ buys information, then both defend, but only player $0$ is attacked.
\smallskip\par\noindent{\textbf{The instant cost for a player during a round.}}
For each player $i \in \{0,1\}$ let $S^i_{s^i}: A^2B^2 \to \mathbb{R}$ describe costs of (not) selling information for price $s^i$, and let $B^i_{s^{1-i}}: A^2B^2 \to \mathbb{R}$ describe costs of (not) buying information for price $s^{1-i}$. In the definition of these functions below, the underscore means any argument.
\begin{align*}
	S^0_{s^0}((a,\_)(\_,b)) & := s - s^0\\
	S^1_{s^1}((\_,a)(b,\_)) & := s - s^1\\
	S^0_{s^0}((\neg a,\_)(\_,\_)) & := S^0_{s^0}((\_,\_)(\_,\neg b)) := S^1_{s^1}((\_,\neg a)(\_,\_))\\& := S^1_{s^1}((\_,\_)(\neg b,\_)) := 0\\
\end{align*}
\begin{align*}
	B^0_{s^1}((\_,a)(b,\_)) & := s^1\\
	B^1_{s^0}((a,\_)(\_,b)) & := s^0\\
	B^1_{s^0}((\neg a,\_)(\_,\_)) & := B^1_{s^0}((\_,\_)(\_,\neg b)) := B^0_{s^1}((\_,\neg a)(\_,\_))\\& := B^0_{s^1}((\_,\_)(\neg b,\_)) := 0
\end{align*}
For each player $i \in \{0,1\}$ let $D^i:D^2A^2 \to \mathbb{R}$ describe the cost of (not) defending.
\begin{align*}
	D^0((d,\_)(\_,\_)) & := D^1((\_,d)(\_,\_)) := \delta\\
	D^0((\neg d,\_)(\neg a,\_)) & := D^1((\_,\neg d)(\_,\neg a)) :=0\\
	D^0((\neg d,\_)(a,\_)) & := D^1((\_,\neg d)(\_,a)) := \alpha
\end{align*}
Let us now define $C^i_{s^0,s^1}$ the total instant cost of a round for player $i$. It is a function from $A^2 \cdot B^2D^2A^2$ to the real numbers. 
Let $C^i_{s^0,s^1}(A_{n-1}\cdot b_nd_nA_n) := S^i_{s^i}(A_{n-1} \cdot b_n) + B^i_{s^{1-i}}(A_{n-1} \cdot b_n) + D^i(d_nA_n)$. Note that $A_{n-1}$ refers to the attacks at the previous round, so the definition does not apply to the first round, which will not be an issue since we are interested in average behaviors.
\smallskip\par\noindent{\textbf{Aggregation of instant costs over time.}}
What each player will optimize is her infinite sequence of expected instant costs, one such a cost per round. There are several classical ways of comparing sequences of real numbers, e.g., mean-payoff, which represents long term objectives; or discounted sum, which represents middle term objectives. Instead, we will use a lexicographic comparison, which represents short term objectives, i.e., each player minimizes her expected instant cost before minimizing the future ones. 




\subsection{Results}\label{sect:game-result}
In this section we show that, under reasonable assumptions, a pattern relating to the attacks triggers the players to buy information. Afterwards, we show that this pattern occurs regularly with probability one. Our special pattern is $A^i_{n} = a \wedge A^i_{n+1} = \neg a$, i.e, player $i$ is attacked at time $n$ but not at time $n+1$. These results are summarized in Theorem~\ref{lem:buy-NE}.

Due to the symmetry between the roles of the two players, several of the concepts and calculations below are only described from the perspective of Player $0$.
\smallskip\par\noindent{\textbf{Useful probability.}}
Let $p' $ be the probability that Player $0$ is attacked at time $n+2$, if she was attacked at time $n$ but not  at time $n+1$. Hence:
\begin{align*}
	p' & = \p(A^0_{n+2} = a \,|\,  A^0_{n} = a \wedge A^0_{n+1} = \neg a)\\
	p' 	& =  \p(A^1_{n+1} = \neg a  \,|\,  A^0_{n} = a) \cdot  \p(A^0_{n+2} = a \,|\, A^0_{n+1} = A^1_{n+1} = \neg a) \\
	&+  \p(A^1_{n+1} = a  \,|\,  A^0_{n} = a) \cdot  \p(A^0_{n+2} = a \,|\, A^1_{n+1} = a)\\
	p' & = (1-q)\cdot p + q \cdot q
\end{align*}
Note that by convex combination we have $p < p' < q$.
\smallskip\par\noindent{\textbf{The cost of not defending.}}
Let us calculate the expected costs of not defending (and not knowing information), after our special pattern.
\begin{itemize}
	\item Let $C(\neg d|(\neg a, \neg a)) = p \alpha$ be the expected cost of not defending at time $n+1$, if no player was attacked at time $n$.
	
	\item Let $C(\neg d|a) = q \alpha$ be the  expected cost of not defending at time $n+1$, if the player was attacked at time $n$.
	
	\item Let $C(\neg d|a \neg a) = p' \alpha$ be the expected cost of Player $0$ not defending at time $n+2$, after the pattern $A^0_{n} = a \wedge A^0_{n+1} = \neg a$.
\end{itemize}
If $\delta \leq p\alpha$, it is optimal for the players to defend all the time, and if  $q\alpha \leq \delta$, it is optimal for the players never to defend. Hence, we assume that $p\alpha < \delta < q \alpha$ in the remainder.
\smallskip\par\noindent{\textbf{The cost of ignorance.}}
Let us now calculate the expected costs of not knowing information, after our special pattern.
\begin{itemize}
	\item Let $C(\neg b | a \neg a)$ be the expected cost of not buying information at time $n+2$, after the pattern $A^i_{n} = a \wedge A^i_{n+1} = \neg a$. So $C(\neg b|a \neg a) = \min (\delta, C(\neg d|a \neg a)) = \min(\delta, p' \alpha)$. Indeed, between defending and not defending, the player will choose one action that minimises the expected cost. 
\end{itemize}
\par\noindent{\textbf{The cost of buying  information.}}
We are now ready to define and calculate $C(b|a \neg a)$, the expected costs of buying information after our special pattern.
\begin{align*}
	C(b|a \neg a)  & =  \p(A^1_{n+1} = \neg a \,|\, A^0_n = a \wedge A^0_{n+1} = \neg a) \cdot p \alpha\\
	& + \p(A^1_{n+1} = a \,|\,A^0_n = a \wedge A^0_{n+1} = \neg a) \cdot (s^1 + \delta)\\
	C(b|a \neg a) & = (1 - q)  p\alpha + q (s^1 + \delta) 
\end{align*}

Lemma~\ref{lem:equi-inequal} below describes two equivalences between inequalities, which are then used to connect formally small selling price and incentive to buy information after our usual pattern (third inequality shows that buying is cost-effective).
\begin{lemma}\label{lem:equi-inequal}
	\begin{enumerate}
		\item\label{lem:equi-inequal1} $(1-q)p\alpha + q(s^1 + \delta) \leq p' \alpha$ iff $s^1 \leq q\alpha - \delta$.
		
		\item\label{lem:equi-inequal2} $(1-q)p\alpha + q(s^1 + \delta) \leq \delta$ iff $s^1 \leq \frac{1-q}{q}(\delta - p\alpha)$.
		
		\item\label{lem:equi-inequal3} $C(b|a \neg a) \leq C(\neg b|a \neg a)$ iff $s^1 \leq \min(q\alpha - \delta, \frac{1-q}{q}(\delta - p\alpha))$.
	\end{enumerate}
\end{lemma}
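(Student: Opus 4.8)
The plan is to prove each of the three equivalences by direct algebraic manipulation, treating the first two as the core computations and deriving the third as an immediate consequence of them. The whole lemma is essentially bookkeeping, so the strategy is to isolate $s^1$ on one side of each inequality through reversible steps.

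For the first equivalence, I would start from $(1-q)p\alpha + q(s^1 + \delta) \leq p'\alpha$ and substitute the previously computed value $p' = (1-q)p + q^2$, so that $p'\alpha = (1-q)p\alpha + q^2\alpha$. The common term $(1-q)p\alpha$ then cancels from both sides, leaving $qs^1 + q\delta \leq q^2\alpha$. Since $q > 0$ throughout the regime of interest, dividing by $q$ yields $s^1 + \delta \leq q\alpha$, i.e. $s^1 \leq q\alpha - \delta$; as every step is reversible this establishes the stated ``iff''. For the second equivalence, I would expand $(1-q)p\alpha + q(s^1+\delta) \leq \delta$ to $(1-q)p\alpha + qs^1 + q\delta \leq \delta$, move the $s^1$ term aside, and collect the rest as $qs^1 \leq \delta - q\delta - (1-q)p\alpha = (1-q)(\delta - p\alpha)$. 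Dividing by $q > 0$ gives $s^1 \leq \frac{1-q}{q}(\delta - p\alpha)$, again through reversible steps.

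For the third equivalence, I would recall the definitions $C(b|a\neg a) = (1-q)p\alpha + q(s^1+\delta)$ and $C(\neg b|a\neg a) = \min(\delta, p'\alpha)$, so that $C(b|a\neg a) \leq C(\neg b|a\neg a)$ holds exactly when the quantity $(1-q)p\alpha + q(s^1+\delta)$ is bounded above by both $\delta$ and $p'\alpha$ simultaneously. Invoking the elementary fact that $x \leq \min(u,v)$ iff $x \leq u$ and $x \leq v$, I would apply parts 1 and 2 to rewrite these two bounds as $s^1 \leq q\alpha - \delta$ and $s^1 \leq \frac{1-q}{q}(\delta - p\alpha)$, whose conjunction is precisely $s^1 \leq \min(q\alpha - \delta, \frac{1-q}{q}(\delta - p\alpha))$.

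Since these are all routine linear manipulations, I do not anticipate a genuine obstacle; the only points requiring care are the sign of $q$ when dividing (guaranteed positive) and the correct substitution of the earlier formula for $p'$ in part 1, on which the cancellation of the $(1-q)p\alpha$ term depends. Everything else is symmetric in the two players and follows by the same reasoning from Player $0$'s perspective.
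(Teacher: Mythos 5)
Your proposal is correct: the paper omits a proof of this lemma entirely (treating it as routine algebra), and your direct verification --- substituting $p'=(1-q)p+q^2$ to cancel the $(1-q)p\alpha$ term in part~1, isolating $s^1$ in part~2, and combining the two via $x\leq\min(u,v)\Leftrightarrow(x\leq u\wedge x\leq v)$ for part~3 --- is exactly the intended argument, with the division by $q>0$ justified since the standing assumption $\delta<q\alpha$ with $\delta>0$ forces $q>0$.
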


\par\noindent{\textbf{Average cycling time.}}
Let us now show that our special pattern occurs regularly. Let $L_a := \sup\{l \,|\, A^0_{0} = \dots = A^0_{l-1} = a\}$ be the duration of a non-stop attack against Player $0$ from the beginning. Assuming that $A^0_0 = A^1_0 = 0$, let $L_{\neg a} := \sup\{l \,|\, A^0_{1} = \dots = A^0_{l} = 0\}$ be the duration of a non-stop truce.

Lemma~\ref{lem:length-cycle} below describes precisely the expectation of $L_a$ and gives an upper bound for the expectation of $L_{\neg a}$.
\begin{lemma}\label{lem:length-cycle}
	\begin{enumerate}
		\item $E(L_a ) = \frac{1}{q}$.
		\item $E(L_{\neg a} ) \leq \frac{q(1-p)}{p^2}$
	\end{enumerate}
\end{lemma}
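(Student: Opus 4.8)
The plan is to read both quantities as sojourn times in the attacker's Markov chain of Figure~\ref{fig:aMC1} and to exploit its one-step memorylessness. The one structural fact that drives everything is that the next-round attack probability depends only on whether \emph{anyone} was attacked in the current round --- it is $q$ if someone was and $p$ if no one was --- so conditioning on the right coarse event collapses the two-player chain into a simple geometric recursion.

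For part~(1) I would argue directly. As long as Player~$0$ is under attack, the event ``someone was attacked last round'' holds automatically (Player~$0$ witnesses it herself), so irrespective of Player~$1$'s status the attacker strikes Player~$0$ again in the next round with the high probability $q$, independently across rounds. Hence, conditioned on the attack being ongoing at the start, the number of consecutive attacks satisfies $\p(L_a = k) = (1-q)q^{k-1}$, and summing the geometric series $\sum_{k\ge 1} k(1-q)q^{k-1}$ closes part~(1). No coupling with Player~$1$ intervenes here, so this part is essentially immediate once memorylessness is observed.

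Part~(2) is the genuinely harder direction, because Player~$0$'s truce \emph{is} coupled to Player~$1$: every time Player~$1$ is attacked the trigger fires and pushes Player~$0$'s next-round attack probability up from $p$ to $q$, shortening the truce, so $L_{\neg a}$ is not simply geometric. I would therefore restrict attention to the two states compatible with the truce, $(\neg a,\neg a)$ and $(\neg a, a)$, and proceed by first-step analysis. Writing $T_{00}$ and $T_{01}$ for the expected number of rounds until Player~$0$ is next attacked, started from those two states respectively, one gets the $2\times 2$ system $T_{00} = 1 + (1-p)^2 T_{00} + p(1-p)T_{01}$ and $T_{01} = 1 + (1-q)^2 T_{00} + q(1-q)T_{01}$; solving it and subtracting the initial round gives $E(L_{\neg a}) = T_{00}-1$, which one then relaxes to the stated bound $\frac{q(1-p)}{p^2}$. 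A cleaner alternative that avoids the algebra is stochastic domination: in either truce state Player~$0$ survives the round with probability at most $1-p$, so $\p(L_{\neg a}\ge k)\le (1-p)^k$ and hence $E(L_{\neg a})\le \frac{1-p}{p}$, which already implies the claim since $q>p$.

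The main obstacle is exactly this coupling in part~(2): one must account for the truce-shortening effect of Player~$1$'s attacks without being forced through an unwieldy closed form. The domination argument is the safe way to get a clean inequality in one line; the first-step solve is exact but needs some care to simplify the rational expression before relaxing it to $\frac{q(1-p)}{p^2}$.
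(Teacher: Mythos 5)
Your approach coincides with the paper's on both parts. Part~(1) is the same geometric-series computation the paper performs, and your stochastic-domination argument for part~(2) is the same memorylessness bound the paper uses: the paper bounds $\p(L_{\neg a}=n)\le(1-p)^nq$ and sums $\sum_n n(1-p)^nq=\frac{q(1-p)}{p^2}$, while your tail-sum version $\p(L_{\neg a}\ge k)\le(1-p)^k$, hence $E(L_{\neg a})\le\frac{1-p}{p}\le\frac{q(1-p)}{p^2}$, is a slightly cleaner variant that in fact yields a tighter constant; the $2\times 2$ first-step system is machinery neither you nor the paper needs. One thing to flag on part~(1): the series you write, $\sum_{k\ge 1}k(1-q)q^{k-1}$, sums to $\frac{1}{1-q}$, not to the $\frac{1}{q}$ claimed in the lemma (and the unconditioned $L_a$ of the paper's definition, which can equal $0$, has $\p(L_a=n)=q^n(1-q)$ and mean $\frac{q}{1-q}$). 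The paper's own proof and the subsequent corollary likewise use $\frac{1}{1-q}$, so the value $\frac{1}{q}$ in the statement appears to be a typo in the paper rather than something your argument can reproduce --- but be aware that, as written, your derivation does not land on the stated quantity, and you should say explicitly which value you are actually proving.
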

\begin{proof}
	$E(L_a) = \sum_{0}^{+\infty} n \p(L_a = n) = \sum_{0}^{+\infty} n q^n(1-q) = \frac{1}{1-q}$.\\ $E(L_{\neg a}) = \sum_{0}^{+\infty} n \p(L_{\neg a} = n) \leq \sum_{0}^{+\infty} n (1-p)^nq = \frac{q(1-p)}{p^2} + 1$.
\end{proof}
\begin{corollary}
	The expected waiting time between two consecutive special patterns (for a given player) is $E(L_a + L_{\neg a} + 1) \leq 2 + \frac{1}{1-q} + \frac{q(1-p)}{p^2}$.
\end{corollary}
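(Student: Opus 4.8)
The plan is to treat the time between two consecutive special patterns for Player~$0$ as one attack phase followed by one truce phase, linked by a single transition round. A special pattern is precisely the round at which the attacker's chain in Figure~\ref{fig:aMC1} switches Player~$0$ from $a$ to $\neg a$, and the next special pattern is the next such switch; in between, Player~$0$ sees a maximal run of $a$ of length $L_a$ and a maximal run of $\neg a$ of length $L_{\neg a}$, together with the one round in which the attacker resumes attacking. That single round is exactly the ``$+1$'' already written into the quantity $L_a + L_{\neg a} + 1$, so the corollary has effectively pre-packaged the decomposition and only its expectation remains to be computed.

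Given the decomposition, the computation is immediate. By linearity of expectation,
\begin{equation*}
E(L_a + L_{\neg a} + 1) = E(L_a) + E(L_{\neg a}) + 1.
\end{equation*}
I would then substitute the two per-phase expectations obtained in the proof of Lemma~\ref{lem:length-cycle}, namely $E(L_a) = \frac{1}{1-q}$ and $E(L_{\neg a}) \leq \frac{q(1-p)}{p^2} + 1$, to get
\begin{equation*}
E(L_a + L_{\neg a} + 1) \leq \frac{1}{1-q} + \frac{q(1-p)}{p^2} + 2,
\end{equation*}
which is exactly the claimed bound; the final ``$2$'' is the corollary's transition round plus the extra unit carried by the truce-length estimate.

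The only step that genuinely needs care is justifying that the per-phase expectations of Lemma~\ref{lem:length-cycle}, computed ``from the beginning,'' remain valid for a generic cycle deep inside a play. The subtlety is that Player~$0$'s marginal attack sequence is not itself Markov: its continuation probability is $p$ or $q$ depending on whether \emph{someone}---possibly only Player~$1$---was attacked, and at a cycle boundary Player~$1$'s state is random rather than the clean $A^1 = \neg a$ assumed when bounding $L_{\neg a}$. I would close this gap by a monotonicity observation: any attack on Player~$1$ during the truce only raises Player~$0$'s per-round attack probability from $p$ toward $q$, which can only shorten the truce, so the expected truce length from \emph{every} admissible boundary state is dominated by the bound derived under the most truce-favorable condition. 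No such argument is needed for the attack phase, since Player~$0$ being attacked already forces ``someone attacked'' and pins the continuation probability at $q$, making $E(L_a) = \frac{1}{1-q}$ exact. This robustness of the upper bound across boundary states is the crux; the rest is linearity and arithmetic.
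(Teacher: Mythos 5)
Your proof is correct and takes essentially the same route as the paper, which gives no explicit proof because the corollary is immediate from Lemma~\ref{lem:length-cycle} by linearity of expectation, using the values $E(L_a)=\frac{1}{1-q}$ and $E(L_{\neg a})\le \frac{q(1-p)}{p^2}+1$ obtained in that lemma's proof. Your closing monotonicity remark about the boundary state of Player~$1$ merely makes explicit a justification the paper leaves implicit; the decomposition and arithmetic are the same.
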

\begin{theorem}\label{lem:buy-NE}
	\begin{enumerate}
		\item If $\delta \leq p\alpha$, it is optimal to defend all the time, so no player ever needs information.
		
		\item If $q\alpha \leq \delta$, it is optimal to never defend, so no player ever needs information.
		
		
		\item If $p \alpha < \delta < q\alpha$, let us further assume that $s \leq \min(q\alpha - \delta, \frac{1-q}{q}(\delta - p\alpha))$. Then players minimizing the costs will choose $s^0 = s^1 = \min(q\alpha - \delta, \frac{1-q}{q}(\delta - p\alpha))$, and information is bought on average at least every $2+ \frac{1}{1-q} + \frac{q(1-p)}{p^2}$ time units.
	\end{enumerate}
\end{theorem}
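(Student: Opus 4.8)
The plan is to handle the three cases separately. The starting point is that under the lexicographic objective each player's per-round buy/defend choice is effectively \emph{myopic}: the attack process is exogenous and the decisions $b_n,d_n$ enter only the instant cost $C^i$ of round $n$, so minimizing the cost sequence lexicographically amounts to minimizing each round's expected cost in turn and breaking ties by the subsequent rounds. The second ingredient is the elementary fact behind the preparatory computations: whatever a player knows at the start of a round, her conditional probability of being attacked is a convex combination $\pi q+(1-\pi)p$ (with $\pi$ the conditional probability that someone was attacked the previous round), hence lies in $[p,q]$, so the expected cost of not defending lies in $[p\alpha,q\alpha]$ while defending costs $\delta$ unconditionally. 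If $\delta\le p\alpha$, defending is weakly cheaper in every information state, so defending always is optimal and no acquired bit can change the optimal action; symmetrically, if $q\alpha\le\delta$, not defending is always weakly cheaper, never defending is optimal, and information is worthless. This settles items~1 and~2.

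For item~3 I assume $p\alpha<\delta<q\alpha$ and $s\le s^*:=\min(q\alpha-\delta,\frac{1-q}{q}(\delta-p\alpha))$. The key observation is that learning the other player's previous status matters only when a player was \emph{not} attacked the previous round (otherwise she already knows someone was attacked, so her attack probability is $q$ and the extra bit is irrelevant). After the pattern $A^0_n=a\wedge A^0_{n+1}=\neg a$ the conditional probability that Player~$1$ was attacked at time $n+1$ equals $q$, giving $C(b|a\neg a)=(1-q)p\alpha+q(s^1+\delta)$ against $C(\neg b|a\neg a)=\min(\delta,p'\alpha)$; by the third equivalence of Lemma~\ref{lem:equi-inequal}, buying is weakly cost-effective in this state exactly when $s^1\le s^*$. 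Thus $s^*$ is the largest price a rational buyer will accept after the pattern, and at $s^1=s^*$ the two instant costs coincide, so the lexicographic tie is resolved by the next round: acquiring the information can only (weakly) lower future expected cost, hence the buyer buys.

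The price claim then follows from a seller's best-response argument. Stationarity reduces the lexicographic comparison of the seller's cost sequence to her per-round expected selling cost; since disclosing an attack costs her $s$ and earns the posted price, and $s\le s^*$ by hypothesis, every sale at a price in $[s,s^*]$ nets her a nonnegative $s^{1-i}-s$. A price above $s^*$ forfeits the pattern-triggered purchases, whereas any price in $[s,s^*]$ retains them, and the per-sale gain is maximized at $s^*$; by the symmetry between the players both set $s^0=s^1=s^*$. Finally, with these prices the buyer buys after every occurrence of the pattern, and by the corollary following Lemma~\ref{lem:length-cycle} the expected gap between consecutive patterns is at most $2+\frac{1}{1-q}+\frac{q(1-p)}{p^2}$, so information is bought on average at least that often.

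The main obstacle is the pricing step. ``Charge the highest acceptable price'' is literally revenue-maximizing only if one restricts attention to the pattern-triggered sales: in the full game the buyer is also willing to buy, at prices up to $q\alpha-\delta$, in other states where her conditional attack probability is small, so the seller faces a genuine monopoly-pricing trade-off between per-sale revenue and sale frequency. To make the claim rigorous I would either argue that the pattern scenario is the binding one for the equilibrium price, or compute the stationary distribution over the buyer's belief states and verify directly that $s^*$ maximizes the seller's expected per-round revenue; cleanly justifying that the indifference at $s^1=s^*$ is broken in favour of buying (via the future-cost argument above) is a further detail to nail down.
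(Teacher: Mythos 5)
Your treatment of items 1 and 2 (conditional attack probability is always a convex combination $\pi q+(1-\pi)p\in[p,q]$, so one of the two constant policies dominates pointwise) and of the buyer's side of item 3 (the threshold $s^{*}=\min(q\alpha-\delta,\frac{1-q}{q}(\delta-p\alpha))$ via Lemma~\ref{lem:equi-inequal}, with the tie at $s^{1}=s^{*}$ broken in favour of buying because the information can only help in later rounds) matches the paper's proof. The frequency claim via the corollary to Lemma~\ref{lem:length-cycle} is also as in the paper.

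The gap is exactly the one you flag yourself: the seller's pricing step. Your proposed repairs (showing the pattern state is ``binding,'' or computing the stationary distribution of the buyer's belief states) are the right instincts for a mean-payoff or discounted objective, but they are unnecessary here, and your intermediate claim that ``stationarity reduces the lexicographic comparison to the per-round expected selling cost'' is not how the lexicographic order works --- it compares sequences at the \emph{first} round where they differ, not on average. That is precisely what the paper exploits. Since the convention is that the attacker uses probability $q$ at the first round, the very first round at which Player~$0$ can buy (round $2$, conditioned on $A^{0}_{1}=\neg a$) is already an instance of the special pattern state, and it is the first time information can matter at all (while a player is being attacked she faces probability $q$ and defends regardless, since $\delta<q\alpha$). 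Hence the first entry of Player~$1$'s cost sequence that depends on $s^{1}$ is her expected round-$2$ selling cost, namely $-(1-q)\,q\,(s^{1}-s)$ if $s^{1}\le s^{*}$ and $0$ otherwise, which is uniquely minimized at $s^{1}=s^{*}$. Lexicographic minimization then fixes $s^{1}=s^{*}$ before any later round --- and any ``other state with a different threshold'' --- can enter the comparison. With that one observation your monopoly-pricing worry dissolves and the rest of your argument goes through; by symmetry $s^{0}=s^{*}$ as well.
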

\begin{proof} We prove the last statement.
		
		The first time that Player $0$ is not attacked is either after the first occurrence of the pattern $A^0_{n-1} = a \wedge A^0_n = \neg a$ or at time $0$, which is as if $A^0_{n-1} = a \wedge A^0_n = \neg a$ had just occurred (since Attacker uses probability $q$ at the beginning.) Since $\delta < q \alpha$, this is first time that Player $0$ may need information. At this time Player $0$ would buy information iff $s^1 \leq \min(q\alpha - \delta, \frac{1-q}{q}(\delta - p\alpha))$. Note that purchase occurs also for $s^1 = \min(q\alpha - \delta, \frac{1-q}{q}(\delta - p\alpha)))$ since the two options give the same expected instant cost, but buying the information may be also useful for future rounds (which is beyond the scope of this paper). Therefore the optimal value of $s^1$ (for Player $1$) is $\min(q\alpha - \delta, \frac{1-q}{q}(\delta - p\alpha))$. Likewise, this is also the optimal value of $s^0$.
	\end{proof}

\subsection{Discussion}\label{sect:model-discussion}

This section discusses how realistic our model is. The conclusion will be that despite its simplicity, our model can incorporate many real-world parameters directly into one of the five abstract parameters, and that slight modifications (and additions) would account for even finer phenomena without altering the overall behavior. Therefore our model provides an accurate insight into the usefulness of our design.

\par\noindent{\textbf{The time frame.}} In the real world, some variables change values quicker than other variables. Moreover, variables of artificial systems are sometimes not allowed to change values quickly to prevent instability. When studying such systems, some variables are thus assumed to be constant over a reasonable time span, which is the case in our model as will be mentioned at several occasions below. Since the premises in Theorem~\ref{lem:buy-NE} are inequalities, as opposed to equalities, we could even replace the ``constant'' assumption by the weaker assumption that the parameters vary within reasonable bounds.

\par\noindent{\textbf{The attacks.}} Correlated attacks: One of the key points of information sharing is that information may help prepare against attacks. This is possible since there are, in the real world, statistical correlations between attacks~\cite{ning2002constructing,allodi2017work}. 

Type of correlation in the real world: correlation between past and present attacks is usually positive. The more attacks occurred recently, the more attacks are likely to occur next. (An extreme but usual case is when attacks occur in bursts.) The correlation between attacks over time on different players is usually also positive. There is a reasonable explanation for that, as attackers often employ a limited arsenal of attacks (e.g., exploits and known vulnerabilities) against a large number of targets~\cite{allodi2017work}, thus leading to the same or similar attacks being observed repeatedly against different targets.
Practical examples could be a malicious IP trying out default password combinations to get access to remote machines of several organizations, or a new strain of malware that is used by more and more attackers.
Our four-state Markov chain is a simple way to express both above-mentioned correlations at once; other more complex expressions would overcomplicate the model, while leading to similar results. After all, we are not interested in modeling all possible
attacks. Our goal is to show that if attacks that match our model take place regularly, it will be cost-effective for players to collaborate regularly as well. Other types of attacks will most likely trigger collaboration even more
often.

Constant Attacker Markov chain: the probabilities $p$ and $q$ are assumed to be (sufficiently) constant over a reasonable time span, but of course they need not be constant in the long run.

Blind or coordinated Attacker: First, the attacker's behavior does not depend on the players' behavior, which is sometimes, but not always, realistic. Indeed, we can say that there are two ends in the attacker behavior spectrum. Either the attacker targets a specific defender with a unique attack (targeted attack), or the attacker targets large groups of defenders with the same attack. In the former case there is little to no correlation to other attacks and therefore collaboration altogether may be futile. In the latter case, the attacker is unlikely to be able to monitor each individual defender closely, and therefore the assumption holds. Second, at any given time, the attacks against each player are drawn independently, which may sound unrealistic at first glance, but it would be possible to show that, in our model, if a player is attacked at a given time, the other player is more likely to be attacked at the same time. This is because the probability which is used by the attacker is either $p$ for both players or $q$ for both players. 

\smallskip\par\noindent{\textbf{The concept of cost aggregation.}} Let us exemplify what aggregation means in this context: someone is cycling to the shop to buy a pair of shoes and a rucksack. The expected cost of this is the price of the shoes, plus the price of the rucksack, plus the probability of the bike being stolen times the price of the bike. One can see here that arbitrarily many (expected) costs can be aggregated into one single (expected) cost without any loss of relevant information, as long as decisions are based on expected costs. This reduces dramatically the number of variables that are required to build a realistic model. The reason why there is one cost variable for defense and one for trading information in our model (rather than just one single variable for both) is because defense and trading do not relate in the same way to the other parts of the model: whether the player defends is her decision (based on available information); whereas trading information depends on the selling price and on the decision of the potential buyer.

\smallskip\par\noindent{\textbf{Defense.}} Independent cost: in our model the cost of defending is the same regardless of the attacker's action. In the real world there may be a difference, but this difference should be small in comparison to the cost difference when not defending (and being attacked or not). For example, assume an organization that finds out about a new attack vector and takes actions to patch it (defend). The cost of defending
(paying developers for patching, system downtime, etc.) is mostly independent from whether or not an attacker will actually try to use this attack vector; especially compared to the cost of falling victim to the attack.
Hence our simplified approximation.

Constant cost: the cost of defending is assumed to be (sufficiently) constant over a reasonable time span, but of course it need not be constant in the long run.

What makes defending costly: First, note that the prepare/not prepare (defend/not defend) modeling is typical of game-theoretic security analysis~\cite{manshaei2013game}. Second, preparing to defend may involve having the company's IT team fully dedicated to this task (thus delaying other tasks), employing extra human personnel (paying extra-hours to employees or temporarily hiring external consultants), limiting access to organizational resources (e.g. during the application of a security patch), reverse engineering malware, etc. Note that the defending cost is incurred per time unit, so it does not take into account fixed costs such as buying (as opposed to renting) special equipment or hiring a security expert for a long time span. However, this is not a drawback of our model
in particular, as it is intrinsic to fixed costs: indeed, it is irrelevant to ``decide'' every time unit if we have bought equipment or hired experts in the past.

\smallskip\par\noindent{\textbf{Trading information.}} Cost of reporting an attack: As mentioned in the beginning of Section~\ref{sec:model}, the cost of sharing information is the aggregation of all the activities related to information sharing. These activities may or may not include writing a report meant to be shared, accessing the sharing platform, taking the risk of information leakage (security, privacy), etc. It is assumed to be (sufficiently) constant over a reasonable time span, but of course it need not be constant in the long run.

Selling price difference: As a matter of design, the price depends on information content. It makes sense for the following reason: reporting the absence of an attack is easy and safe in most cases, whereas reporting an attack in detail costs more, as suggested in the paragraph above. It is possible to prove that, in our model, having a single price would still work, but players would purchase less often.

Constant selling price: In our model the selling prices are chosen by the players once and for all. This only means that the prices are assumed to be stable over a reasonable time span in the real world (to make it possible to build a predictable seller-buyer relationship).
\smallskip\par\noindent{\textbf{Expected costs.}} The costs considered in the model are generally viewed as constant (over a reasonable time span). This may, at first, cause confusion, as e.g. the cost of defending when all that is necessary is adding an IP in a blacklist is
orders of magnitude less than the cost of reverse-engineering new malware, developing and deploying a patch, etc. However, this is not a problem for the model. The players calculate the costs before they actually occur, i.e. they are expected costs, and make decisions based on this expectation.
Effects of the variation become negligible in the long run.

\smallskip\par\noindent{\textbf{Aggregation of instant costs over time.}} Our lexicographic comparison has two advantages, compared to the aforementioned mean-payoff, etc: first, in the current setting it makes calculations and proofs easier; second, in our model, information will be bought mainly if it is worth instantly, whereas in other models it would also be bought if it could be useful later. Yet we will still be able to prove that it is bought regularly, and then to conclude that it would also be bought regularly (even more often) in other models with farther-sighted players.

\smallskip\par\noindent{\textbf{Reputation.}} Although we use neither explicit parameters nor dedicated mechanisms to express or process reputation in the model, our model can still account for fixed reputations over the relevant time spans.

Seller reputation: if we want to model that the buying player does not fully trust the information provided by the selling player, we can modify the corresponding probabilities $p$ and $q$ so as to decrease their difference (only for the other selling player, not for the player herself). Solving such a refined model would be similar but a bit more tedious than what we did in Section~\ref{sect:game-result}.

Buyer reputation: if we want to model that the selling player does not fully trust how the buying player will use the sold information, she can simply incorporate this distrust in the cost of sharing information.

\smallskip\par\noindent{\textbf{Number of players.}} In our model there are only two defending players. If we want to consider a large number of players, the selling price should become near-optimal for every player (for game-theoretic reasons), thus further supporting the exchange of information.
	
In addition, while the assumption $p \alpha < \delta < q\alpha$ seems natural (outside of this interval, there is nothing to do), the condition $s \leq \min(q\alpha - \delta, \frac{1-q}{q}(\delta - p\alpha))$ reminds us that this comes from a specific model. However, the latter condition should be understood as ``if the disclosure cost is small enough'', and the conclusion should then hold in other models.
	
Lastly, in our model, other patterns than our special pattern may trigger purchase of information, possibly for smaller selling prices, but we have not investigated this. Indeed our main point was only to show that information is exchanged on a regular basis if the disclosure cost is small enough.

\section{Prototype and evaluation}\label{sec:prototype}
In this section, we present our prototype implementation of \textit{TRIDEnT}. Specifically,
we present \emph{eth-TRIDEnT}, running on Ethereum. The code can straightforwardly be modified to run on permissioned platforms, e.g., Hyperledger Fabric.
\smallskip\par\noindent\textbf{Smart contract pseudocode:}\label{sec:appendix:pseudo}
In Smart Contract \ref{fig:contract} we provide a pseudocode description of the most notable parts of the backbone smart contract $\mathbb{B}_{\text{market}}$, which provides the basic functionality of \emph{TRIDEnT}. We loosely
follow the notation of~\cite{kosba2016hawk}, with \$ prepended to variables associated to some form of currency (tokens), and \emph{CLT}
standing for the client (e.g. Ethereum wallet address) who calls the function. The initialization of the data structures is omitted for space reasons. The logic of the contract is not trivial, and therefore we
proceed to provide a brief overview.

The \emph{register} function allows clients to register in the system with a public key \texttt{pk} and initializes their balance (ledger) and
ratings set. This initial balance provided in Ether cannot be retrieved from the contract and is therefore considered a form of coin burning even though the value invested is useful inside the system\footnote{Traditional ``wasteful'' burning can naturally also be employed.} . Only registered clients can call any of the following functions. The \emph{advertise} function allows clients to post advertisements
accompanied by tags (see line~\ref{line:adv}) in the marketplace. The function \emph{rmAdvert} allows clients that have posted an advertisement to remove it from the marketplace, while returning fees and deposits held by the contract to their rightful recipients. The \emph{mkOffer} function
adds an offer to a specific advertisement already existing in the marketplace. A fee (initial payment to the seller if the offer is accepted) and a deposit (incentive to provide rating) are withheld by the contract. The \emph{delOffer} function allows the offer maker or the advertiser to
delete (take back or turn down respectively) an offer and send the tokens withheld by the contract to the offer maker. Moving on, the \emph{accOffer} function allows clients that have made advertisements to accept offers for them, claiming the initial fee and providing the encrypted
(ip:port) information needed to establish the stream (see $\mathcal{SE}$ protocol of Figure~\ref{pr:stream}). The \emph{unsubscribe} function allows
clients that have either posted an advertisement or are subscribed to one, to remove their subscription. Finally, \emph{rate} allows clients to
provide a rating about a client that made an advertisement about a stream they are subscribed to. Upon their first rating for a given subscription, they receive their deposit back.
\begin{smartcontract}
	{\footnotesize
	\caption{The Backbone contract $\mathbb{B}_{\text{market}}$}
\label{fig:contract}
\SetKwProg{generate}{Function}{:}{end}
	\generate{register}{
		Upon receiving (\texttt{register}, \texttt{\$payment}, \texttt{pk}) from \emph{CLT}:\\
		ledger['\emph{CLT}'] := \texttt{\$payment}\\
		ratings['\emph{CLT}'] := $\emptyset$\\
		key['\emph{CLT}'] := \texttt{pk}\\
		parties := parties $\cup$ (CLT)\\
	}
	\generate{advertise}{
		Upon receiving (\texttt{advertise}, \texttt{tags}) from \emph{CLT}:\\
		\textbf{ASSERT} \emph{CLT} $\in$ parties\\ \tcc{This assertion is repeated in all subsequent funtions, but omitted for clarity}
		adv = (\texttt{tags}, \emph{CLT}, \{$\emptyset$\}, \{$\emptyset$\})\label{line:adv}\\
		adverts := adverts $\cup$ adv \\
	}
	\generate{rmAdvert}{
		Upon receiving (\texttt{rmAdv}, \texttt{adv}) from \emph{CLT}:\\
		\textbf{ASSERT} \texttt{adv}$\,\in\,$adverts, \emph{CLT} = \texttt{adv}[1] \tcp*{see line~\ref{line:adv}}
		adverts := adverts $\setminus$ \emph{adv}\\
		Return pending offer fees and deposits to offer makers
	}
	\generate{mkOffer}{
		Upon receiving (\texttt{makeOffer}, \texttt{adv}, \texttt{\$fee}, \texttt{\$deposit}) from \emph{CLT}:\\
		\textbf{ASSERT} \texttt{adv} $\in$ adverts\\
		ledger['\emph{CLT}'] := ledger['\emph{CLT}'] - \texttt{\$fee} - \texttt{\$deposit}\\
		ledger[this] := ledger[this] + \texttt{\$fee} + \texttt{\$deposit}\\
		\tcc{tokens transferred to the contract itself (this)}
		offer = (\emph{CLT}, \texttt{fee}) \tcp*{offer[0] = CLT, offer[1] = fee}
		\texttt{adv}[2] := \texttt{adv}[2] $\cup$ offer\\
	}
	\generate{delOffer}{
		Upon receiving (\texttt{dlOffer}, \texttt{offer}) from \emph{CLT}:\\
		\textbf{ASSERT} $\exists\,$adv $\in\,$adverts: offer $\in\,$adv[2], (\emph{CLT} = offer[0] OR \emph{CLT} = adv[1]) \tcp*{CLT is advert owner or has made the offer}
		\texttt{adv}[2] = \texttt{adv}[2] $\setminus$ \texttt{offer}\\
		Return \texttt{\$fee} and \texttt{\$deposit} to offer maker
	}
	\generate{accOffer}{
		Upon receiving (\texttt{acceptOffer}, \texttt{offer}, \texttt{c}) from \emph{CLT}:\\
		\textbf{ASSSERT} $\exists\,$adv $\in\,$adverts: offer $\in\,$adv[2], \emph{CLT} = adv[1]\\
		sub = (\texttt{offer}[0], false, \texttt{offer}[1], \texttt{c}) \tcp*{c = Enc(ip:port)}
		adv[3] = adv[3]  $\cup$ sub\\
		ledger['\emph{CLT}'] := ledger['\emph{CLT}'] + \texttt{offer[1]} \tcp*{claim initial \$fee}
		ledger[this] := ledger[this] - \texttt{offer[1]}\\
		delOffer(\texttt{offer})\\
	}
	\generate{unsubscribe}{
		Upon receiving  (\texttt{unsubscribe}, \texttt{sub}) from \emph{CLT}:\\
		\textbf{ASSERT} $\exists\,$adv $\in\,$adverts: \texttt{sub} $\in\,$adv[3], (\emph{CLT} = adv[1] OR \emph{CLT} = \texttt{sub}[0]) \tcp*{CLT is advert owner or subscriber}
		adv[3] = adv[3] $\setminus$ \texttt{sub}\\
	}
	\generate{rate}{
		Upon receiving (\texttt{rate}, \texttt{sub}, \texttt{rating}) from \emph{CLT}:\\
		\textbf{ASSERT} $\exists\,$adv $\in\,$adverts: \texttt{sub} $\in\,$adv[3]
		\emph{CLT} = \texttt{sub}[0],  \tcp*{CLT is a subscriber}
		ratings[adv[1]] := ratings[adv[1]] $\cup$ \emph{rating}\\
		\tcc{if rating for the first time}
		\uIf{\texttt{sub}[1] = false}{
		ledger['\emph{CLT}'] := ledger['\emph{CLT}'] + deposit\\
		ledger[this] := ledger[this] - deposit\\
	}		sub[1] := true\\
}}
\end{smartcontract}

\smallskip\par\noindent\textbf{eth-TRIDEnT:}
We implemented our smart contracts in Solidity v0.4.25. We created two contracts, namely {\tt{AlertExchange.sol}} and {\tt{Token.sol}}. The former is the heart of the prototype,
implementing the functions described in pseudocode above;
the latter implements a simple ERC-20 token used as the system's currency. In order to mint the token, a node has to burn ETH. Micro-transaction channels were implemented by calls to the already deployed $\mu$-Raiden contract. 
When it comes to evaluating smart contracts, transaction costs (in Ethereum: Gas fees), i.e. fees paid by users to miners, are an important metric and the usual way of arguing about the feasibility of an approach.

In Table~\ref{tab:costs}, we show the transaction costs of the functions of the Alert Exchange contract, as calculated after deployment to the Rinkeby test network. The smart contracts can be found at the following addresses on Rinkeby:
\smallskip{\small \\--\textbf{Alert Exchange:} 0x682528b9cc9b74ca00efb004a4619fabd6f5cd69
\\--\textbf{Token:} 0x380ab69f3230c1990c9bbf4ecdcfefc6bd26501c
\\--\textbf{Raiden Channels:} 0x83aeb45854e1ac54f5d9fa42fd7a79b398aa50cf
\begin{table}[]
\centering
\begin{tabular}{c | r r r r}
\toprule
\multicolumn{1}{c}{\cellcolor{black!30}} & \multicolumn{1}{c}{\cellcolor{black!30}}& \multicolumn{1}{c}{\cellcolor{black!30}\textbf{Cost}} & \multicolumn{1}{c}{\cellcolor{black!30}}&\multicolumn{1}{c}{\cellcolor{black!30}} \\
\multicolumn{1}{c}{\cellcolor{black!30}\textbf{Function}}& \multicolumn{1}{c}{\cellcolor{black!30}\emph{Gas}} & \multicolumn{1}{c}{\cellcolor{black!30}\emph{Gwei}}& \multicolumn{2}{c}{\cellcolor{black!30}\hspace{5pt}\emph{EUR}\hspace{5pt}} \\
\multicolumn{1}{c}{\cellcolor{black!30}}& \multicolumn{1}{c}{\cellcolor{black!30}} & \multicolumn{1}{c}{\cellcolor{black!30}\emph{current}}& \multicolumn{1}{c}{\cellcolor{black!30}\hspace{5pt}\emph{current}\hspace{5pt}} & \multicolumn{1}{c}{\cellcolor{black!30}\hspace{5pt}\emph{peak}} \\ \hline
\multicolumn{1}{c}{\cellcolor{black!30}\tt{deploy}}                    & 3\,994\,723           & 15\,978\,892	        & 2.88			& 99.68		\\ \hline
\multicolumn{1}{c}{\cellcolor{black!30}\tt{register}}                  & 54\,672             & 218\,688        		& 0.04			& 1.36		\\ \hline
\multicolumn{1}{c}{\cellcolor{black!30}\tt{advertise}}                 & 173\,279            & 693\,116             	& 0.12			& 4.32		\\ \hline
\multicolumn{1}{c}{\cellcolor{black!30}\tt{rmAdvert}}                  & 41\,257             & 165\,028        		& 0.03			& 1.03		\\ \hline
\multicolumn{1}{c}{\cellcolor{black!30}\tt{mkOffer}}                   & 194\,381            & 777\,524         	& 0.14			& 4.85		\\ \hline
\multicolumn{1}{c}{\cellcolor{black!30}\tt{delOffer}}                  & 25\,820             & 103\,280        		& 0.02			& 0.64		\\ \hline
\multicolumn{1}{c}{\cellcolor{black!30}\tt{accOffer}}                  & 756\,014            & 3\,024\,056            	& 0.54			& 18.86		\\ \hline
\multicolumn{1}{c}{\cellcolor{black!30}\tt{unsubscribe}}               & 34\,139             & 136\,556    		& 0.02			& 0.85		\\ \hline
\multicolumn{1}{c}{\cellcolor{black!30}\tt{rate}}                      & 46\,663             & 186\,652         	& 0.03			& 1.16		\\ \hline
\bottomrule
\end{tabular}
\smallskip
\caption{Trans. costs at the time of writing (Sep'18) and the peak of Ethereum price and network load (Jan'18) (last col.).}
\label{tab:costs}
\vspace{-0.5cm}
\end{table}}
\smallskip\\The code of the main contract ({\tt{AlertExchange}}) was verified and can be easily inspected at Etherscan\footnote{\url{https://rinkeby.etherscan.io/address/0x682528b9cc9b74ca00efb004a4619fabd6f5cd69}}.
For the calculation of the transaction costs, the gas price was set to 4 Gwei ($4\cdot10^{-9}$ ETH) and the ETH/EUR exchange rate to 180.00, according to observations at the time of writing\footnote{According to \url{https://ethgasstation.info/}. September 2018.}.
Among the transaction costs of the methods, the cost for accepting an offer stands out. This is because the method contains the deletion of an offer, as well as the creation of a subscription.
Subscriptions are the contracts' biggest data structures, which makes them relatively expensive to save.
As an example, the establishment of a stream would cost a total of under 1\,EUR at the time of writing and under 30\,EUR considering the worst-ever price and network traffic\footnote{15th January 2018. ETH/EUR: 1134.20, Average gas price: 22\,Gwei.}. As these streams are projected to be relatively long-lived, these costs are expected to be negligible for firms.
Note that the contract deployment cost is a one-time cost for the whole lifetime of the system.
The implementation of the {\tt{AlertExchange}} contract consists of 423 lines of Solidity code and
a detailed report on the engineering obstacles we encountered will follow in an extended version of the paper.
\smallskip\par\noindent\textbf{Client-side application: }
In order to make interaction with the deployed smart contracts user-friendly, we also
developed a client-side command-line application consisting of around 2\,000 lines of
JavaScript code.
The application calls the smart contract methods using the web3
JavaScript API\footnote{https://github.com/ethereum/wiki/wiki/JavaScript-API}.
It enables all necessary functionalities for viewing, advertising, subscribing,
and rating alert streams, as well as calculating trust values for participants.
Furthermore, the client-application allows parties to provide an endpoint from which
subscribers can download alerts, authenticates the
subscribers and validates that the alert batches are paid for.

\section{Related work}\label{sec:related}
In this section we focus on the related work with regard to incentives and trust for \acp{CIDS}, as well as on collaborative platforms that exist nowadays. For a holistic view on game-theoretical approaches in a variety of network security setting, and economic studies of information sharing, we refer the reader to the surveys of Mansaei et al.~\cite{manshaei2013game}, and Laube and B\"ohme~\cite{laube2017strategic}. For a summary of \acp{CIDS} the reader can refer to Section~\ref{sec:background}.

Gal-Or and Ghose~\cite{gal2005economic} propose a game-theoretic model for information-sharing in cyber-security. They conclude that although information sharing is beneficial to firms, with no additional incentives and anti-free-riding mechanisms, they may not be encouraged to participate truthfully. This work acted as a motivation for \textit{TRIDEnT}. 
Despite the significant work that has been done in the areas of collaborative security \cite{meng2015collaborative} and \acp{CIDS} \cite{vasilomanolakis2015survey,zhou2010survey}, only a very small portion of it deals with incentives and trust among the participants. 
That is, the majority of the related work assumes that all the participants are honest, trustworthy and willing to collaborate.
Duma et al. \cite{Duma2006}, were one of the first to propose a simple trust model for \acp{CIDS}. However, their model is prone to insider attacks (e.g., betrayal attacks).
Fung et al. also examined this field, proposing more advanced mechanisms in their incremental works \cite{fung2008trustids,fung2011dirichlet}, and attempted to deal with the aforementioned insider attack challenge.

Zhu et al. \cite{zhu2012guidex}, touch the topic of incentives in \acp{CIDS} and propose a game-theoretical model that ensures that peers of the system contribute equally (in terms of computational power). This work is very relevant for us, however it is limited only to incentives that are connected to the computational power of the overall system as well as to the resource allocation problem. Hence, this work does not take into account the bigger picture of collaborative security; e.g., the incentives for an organization to join a collaborative ecosystem, the economics of such a system, etc. Similarly, Guo et al. investigate incentives for \acp{CIDS} in mobile ad-hoc networks~\cite{guo2018incentive}. They propose an auction process using virtual credits and model cooperative detection as an evolutionary game.
Although they also use a form of virtual currency, their setting is substantially
different than ours (completely different assumptions) and they do not consider trust relations among collaborators. 
Jin et al.~\cite{jin2017tradeoff} propose a privacy-protection mechanism for \ac{CIDS} where participants
are trustworthy and misreport (share false information) with a known probability to protect their privacy. In our view, misreporting is not an option for real-world systems, and privacy must be preserved by
proper data sanitization and forming collaboration overlays with trustworthy partners, as enabled by \textit{TRIDEnT}. Finally, the idea of using
blockchain technology and smart contracts to enable collaboration in the security domain has been pitched
in several interesting recent works~\cite{alexopoulos2017towards,meng2018intrusion,webstersharing}.
However, to the best of our knowledge, no comprehensive incentive analysis or concrete instantiation for network alert sharing have been proposed before. In the start-up world, \emph{PolySwarm}~\cite{polyswarm} is creating a prediction marketplace for the classification of malware.
This is an interesting idea and could be integrated with \emph{TRIDEnT} on a single platform.

Finally, besides the aforesaid (mostly academic) work, a number of practical real-world alert data sharing platforms exist. The \acf{MISP} is one such example of a collaborative platform that has been used by several organizations \cite{wagner2016misp}. Initially \ac{MISP}, as its original name implies, mainly focused on malware exchange; over the years though, it has been practically extended to more alert types over the years. However, \ac{MISP} does not tackle the incentives topic, nor the trust and alert quality aspect. In addition, due to its structure (closed system that requires a formal process to enroll) attacks on it (internal or external) have not been extensively investigated. 
Other similar systems that have been proposed include IBM's X-Force Exchange\footnote{https://exchange.xforce.ibmcloud.com}, Facebook's ThreatExchange\footnote{https://developers.facebook.com/programs/threatexchange} and DShield\footnote{https://www.dshield.org}. 
These platforms are in principle closed systems that focus on a one-to-many or many-to-many exchange of low level (e.g., raw) data between entities \cite{laube2017strategic}. In many cases, the diversity of shared data is very limited; for instance, the Facebook ThreatExchange is only dealing with Facebook alerts to assist Facebook API developers.
Note that we do not view \emph{TRIDEnT} as a competitor to existing sharing platforms, which have solved a lot of practical problems concerning the structure and mechanics of the actual data transmission and correlation (with the most promising, in our opinion, being \ac{MISP}), but rather as an underlying important incentives layer.
\vspace{-5pt}

\section{Generalization and limitations}\label{sec:limits}
In this section, we discuss how our system can generalize to other types of shared data, as well as what limitations
exist.
\par\noindent\textbf{Generalization}
In essence, \textit{TRIDEnT} is a generic streaming marketplace, and thus could be used for sharing any kind of information,
and not just security alerts. However,
the importance, numerous challenges, as well as the unique risk incurred by both buying and selling alerts, make the functionalities offered by \textit{TRIDEnT} necessary.
Despite its limitations (see below), we consider \textit{TRIDEnT} to be a considerable step towards
increased security collaboration in practice; not only among big companies and organizations, but open to
anyone. Also, the format of \textit{TRIDEnT}'s advertisement and alerts is naturally not limited to Snort-style
alerts. The system can be used to establish streams where e.g. bloom filters or pre-trained machine learning
models are exchanged, allowing greater real-time detection capabilities. The stream establishment functionality
could even be employed in order to run privacy-preserving CIDS on raw data using secure multiparty computation (SMC) in the
future. The main challenges of incentives, alert quality and trust would still be present to some extent in any case.
We consider the field to be important and ripe for research progress after a stagnant period.
\par\noindent\textbf{Limitations}
Considering our system's limitations, an important one is that it has not been tested in the wild, since the financial cost
of developing and supporting production-ready software (and more so security-critical software), greatly exceeds our research budget.
Moreover, the game-theoretic model necessarily abstracts away some variables of the real world and can be extended to take into account
more phenomena (such as synergized returns). Additionally, the stream reselling attack (see Section~\ref{sec:attacks})
may require new mitigation mechanisms.

\section{Conclusion}\label{sec:concl}
A lack of incentives has resulted
in underwhelming adoption of \acp{CIDS}, although technical means are available.
%
%
In this work, we described the end-to-end development of \textit{TRIDEnT}, a decentralized marketplace for
network alert sharing. We started from the inception of the concept, and proceeded to the
system design, game-theoretic analysis, and full prototype implementation.
Our simple, yet insightful game-theoretic analysis of the problem may be of independent
interest, and is open to further development. Although all details and parameters of
a real-world deployment of such a system may be excessively difficult to capture in a
game-theoretic model, our model is a necessary first step and offers a degree of validation to
our concept.
Ideally, our next step would be to find interested parties and experiment with
a real-world \textit{TRIDEnT} alert exchange network, possibly with MISP as the overlay
platform (and TRIDEnT as the trust underlay).

As a future step, it would be interesting to
model interdependencies among the defenders~\cite{kunreuther2003interdependent, laszka2015survey}
in TRIDEnT and craft incentives to increase overall security investment.
Other open challenges include support for varying degrees of data sanitization
depending on the trust relation among collaborators, and
privacy-preserving reputation and overlay formation. Finally, the market should be
analyzed towards its tendencies to build monopolies, as this could be detrimental to its success.
As a concluding remark, we believe that the community should invest more resources in
collaborative security research, especially in light of recent advances in the field, like privacy-preserving machine learning.

\begin{acronym}
	\acro{IDS}{Intrusion Detection System}
	\acro{CIDS}{Collaborative Intrusion Detection System}
	\acro{DL}{Distributed Ledger}
	\acro{DDoS}{Distributed Denial of Service}
	\acro{DoS}{Denial of Service}
	\acro{SPoF}{Single Poing of Failure}
	\acro{IoT}{Internet of Things}
	\acro{EVM}{Ethereum Virtual Machine}
	\acro{SPoF}{Single Point of Failure}
	\acro{MISP}{Malware Information Sharing Platform}
\end{acronym}

\bibliographystyle{alpha}
\bibliography{./bib/more,./bib/Cybersecurity}
\newpage
\appendix
\section{Significance of trust estimates}\label{app:setN}
$N$ is a threshold number of evidence needed to achieve
a desired level of significance of the estimate, in the sense of a confidence interval. The value $N$ is
given by the following expression (from~\cite{hauke2015statistics}), assuming a desired significance parameter $z$, and a certainty level $c>0$,
which is the length of the $100(1-z)\%$ confidence interval.
{\small
	\begin{equation}\label{eq:n}
	N = \frac{-\kappa^2 \cdot (2u^2-4t+4t^2)+ \sqrt{4u^2 \cdot \kappa^4 \cdot (1-u^2) + \kappa^4 \cdot (2u^2-4t+4t^2)^2}}{2u^2}
\end{equation}}
where $u = 1-c$ and $\kappa$ the $100(1-\frac{z}{2})$ percentile of the standard normal distribution. It is notable that the value of N computed in this way is not constant, yet
it depends on the ration of positive and negative experiences considered.

Assume we want to collect evidence until we have enough ($N$) to calculate an estimate with a $100\cdot(1-z)\%$
confidence interval (CI) of length $1-c$. The a priori information, in the form of the baseline trust value,
fades away gradually, and disappears completely when the number of collected evidence reaches $N$.
In our case, we choose to consider an $80\%$ CI of length $0.2$. This means that the actual value associated with a point estimate $t$,
falls with probability $0.8$ in the interval $(t-0.1,t+0.1)$. This leads to a number of required evidence to fade
out the prior value completely, in the ``best-case scenario'' (point estimate - average value $t=0$ or $t=/1$), $N=7$ pieces of
evidence. In the worst case scenario ($t=0.5$), $N=40$. Even in the worst case, 14 evidence pieces are enough to fade out
the initial value almost completely ($c_e=0.92$). Therefore, an implementation can either
compute $N$ from equation~\ref{eq:n} or
choose a constant $N$ value in the range of $[10,15]$, which, by rule of thumb, will
lead to similar results.

\end{document}